\newcommand {\ignore} [1] {}
\newcommand {\arboricity} {w}
\newtheorem{theorem}{Theorem}[section]
\newtheorem{proposition}[theorem]{Proposition}
\newtheorem{claim}[theorem]{Claim}
\newtheorem{observation}[theorem]{Observation}
\newtheorem{corollary}[theorem]{Corollary}
\title{Networks of Complements}
\author
{Moshe Babaioff\thanks{Microsoft Research, \texttt{moshe@microsoft.com}.}
	\and
	Liad Blumrosen\thanks{Hebrew University \texttt{blumrosen@gmail.com}.}
	\and
	Noam Nisan\thanks{Hebrew University and Microsoft Research, \texttt{noam@cs.huji.ac.il}.}
}
\date{}
\begin{document}

\maketitle

\begin{abstract}
We consider a network of sellers, each selling a single product, where the graph structure
represents pair-wise complementarities between products.
We study how the network structure affects
revenue and social welfare of equilibria of the pricing game between the sellers.  We prove positive and
negative results, both of ``Price of Anarchy'' and of ``Price of Stability'' type, for special families of graphs (paths, cycles)
as well as more general ones (trees, graphs).
	We describe best-reply dynamics that converge to non-trivial equilibrium in several families of graphs, and we use these dynamics to prove the existence of approximately-efficient equilibria.
\end{abstract}



\maketitle

\section{Introduction}

Sellers typically do not operate in isolated markets but in conjunction with other sellers and buyers.
In particular, sellers need to take into account the fact that buyers may value the goods they sell as substitutes or complements to goods sold by others.
This has a tremendous impact on how sellers compete with each other.
Indeed, in Cournot's \cite{Cournot1838} famous paper from 1838 about
sellers who
compete through quantities, Cournot also describes a model of a duopoly selling perfect complements, zinc and copper.
In Cournot's example, a manufacturer of zinc may observe that some of her major customers produce brass (made of zinc and copper);
 Therefore, zinc manufacturers compete not only with other zinc sellers, but they also indirectly compete with manufacturers of copper, as both target the money of brass producers.
 We are interested in a more complicated competition structure, where as zinc can be also used for Galvanization of iron, zinc sellers compete at the same time with sellers of iron.
 In a similar way, iron is also demanded by car manufacturers that  need to purchase glass from other sellers, and so on.

Another classic example is by Ellet \cite{Ell66}, who studied how owners of two consecutive segments of a canal determine the tolls for shippers; Clearly, every shipper must purchase a permit from both owners for being granted the right to cross the canal.
Another, more contemporary, example might be a high-tech or pharmaceutical firm that must use two registered patents to manufacture its product; The owners of the two patents quote prices for the usage rights, and these patents can be viewed as perfect complements for the firm. As a final example, consider an international trader who wishes to export goods from country X to country Y, and needs to pay license fees to both countries. In the last two examples, one can see how licenses have a network structure, as each patent may be needed for the production of several different products, and trade may occur between country X and country Y, but also between country X and another country Z, etc.

In this paper, we study markets where
goods are complementary to several other goods, but not substitutes.
Price discrimination is impossible, and sellers need to offer the same price in all markets.
This situation creates a global price competition between sellers, which raises interesting questions we aim to explore:
What kinds of equilibria exist in these games?
	How efficient are the equilibria in this game? Will natural dynamics reach highly efficient equilibria?	
	


The structure of the market plays a central role in our analysis. We model the market using a weighted undirected graph, where each vertex represents a seller of a certain good. An edge with weight $v$ between vertices $i$ and $j$ indicates that there is a buyer that is willing to pay an amount $v$ for the bundle of goods $\{i,j\}$.\footnote{
Throughout this paper, for the simplicity of presentation, we consider buyers that demand bundles of size 2 which we can model using graphs;
Buyers that demand bundles of arbitrary sizes can be modeled as hyper-edges on a hyper-graph. We show how some of our results extend to hyper-graphs in the appendix.
}
In the above example, a market with sellers of copper, zinc, iron and glass can be represented as a path-graph with 4 vertices (See Fig. \ref{fig:simple-networks}), where edges connect copper and zinc, zinc and iron, and iron and glass.

We study the following simultaneous full-information pricing game between sellers on a graph: the sellers observe the values of the buyers, which are common knowledge; each seller then posts a single price to all buyers;  a buyer on an edge buys the two goods on this edge if the total price (the sum of the prices of the two goods) is no larger than her value.
Sellers have zero manufacturing costs and unlimited supply, and
the profit of each seller is the price she posted times the number of buyers that accepted this price. In a (pure) Nash equilibrium, no seller would benefit from changing its price given the prices offered by the other sellers.

There are two natural benchmarks for measuring the quality of equilibria in such games. The first is the
\textit{maximum welfare},
which is the sum of values of all buyers.\footnote{Assuming 0 production costs for sellers, and payments cancel out.}
This is the welfare that would be achieved with zero prices for all goods.
	The second benchmark is the \textit{optimal monopolist revenue}, which is the optimal total revenue achievable by a monopolist that owns all the goods in the network. (Clearly, the optimal monopolist revenue is always at most the maximum welfare.)
Several papers studied the problem of computing the optimal pricing for a monopolist in our setting. The problem was proposed by \cite{Guruswami05} which showed that it is APX-hard and presented an approximation algorithm that is logarithmic in the number of buyers. A $1/4$-approximation algorithm was later presented by \cite{Balcan06} when buyers are interested in bundles of size 2, and recently \cite{Lee15} showed that this bound is tight under some computational assumptions. A similar algorithm for this problem was suggested by \cite{LBATL07} in the context of setting up peering connections in networks. 
	\cite{KMR11} provided an improved bound for monopolist revenue maximization with buyers having the same value but interested in one or two items.
Unlike these papers which focus on the monopolist's algorithmic problem,
we focus on analyzing equilibria in the game between competing sellers and on how the welfare and revenue of
these equilibria approximate the above benchmarks.

It is straightforward to see that some equilibria in these games demonstrate a complete market failure.
For example, a vector of prices of $\infty$ posted by all sellers forms a Nash equilibrium that yields zero welfare and revenue.
Yet, there might be multiple equilibria in these games, and one may  hope that other equilibria perform better.
Indeed, we prove that for some families of graphs
the best equilibria (``price of stability'') have high revenue and welfare.
%
Observe that we cannot hope for full efficiency as even for path graphs, the most efficient equilibria are sometimes not fully efficient:
For example, in Figure \ref{fig:simple-networks}, if either the seller of zinc or of iron offer a price above 1, then one of its edges will not be sold. However, there is no equilibrium in which these two sellers offer prices at most 1: if this occurs, each seller gains at most 2 (from selling to its two edges) but they can get revenue of at least 5 by offering 5 and selling to the middle edge only.


	\begin{figure}[t]
		\center
		\framebox{\includegraphics[width=2.4in, height=0.8in]{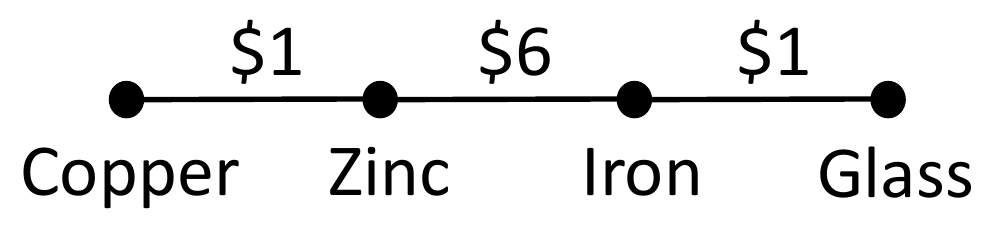}
		}
		\caption{A market with 4 sellers and 3 buyers. The weight on each edges is the amount this buyer is willing to pay for the bundle of the two adjacent goods. For example, the middle edge represents a buyer that is willing to pay $\$6$ for a bundle of zinc and iron, but has value 0 for each good alone.
\label{fig:simple-networks}}
	\end{figure}

	Another way to alleviate the problem of
		low welfare and revenue of some equilibria 
	is by restricting attention to equilibria with natural properties.
For instance, it may be unreasonable for a seller that does not sell any good to insist on a very high price despite having zero costs;
Such behavior might be considered as a malicious behavior towards his neighboring sellers. 
We therefore sometimes restrict attention to \emph{non-malicious} equilibria,
where all sellers that are not selling at all offer a price of 0.
	Non-malicious equilibria  serve as a main tool for proving the existence of approximately-optimal equilibria: we show that the revenue of every such equilibrium approximates the maximum welfare, and we show cases where such equilibria exist.



\vspace{2mm}

\noindent \textbf{Our Results:}\\ Our main goal is to understand how the network structure affects various properties of equilibria in markets.
We aim to understand how well can the welfare and revenue in equilibrium approximate the maximum welfare 
and the optimal revenue. 
We explore whether best-response dynamics can lead to approximately-optimal equilibria, and we prove the existence of such equilibria in several settings.
We study graphs of different complexities, like paths, cycles, trees and general graphs.

We start by considering some special families of graphs (paths and cycles) and show that there is a  natural best response dynamics that reaches equilibria with a constant fraction of the optimal welfare as revenue.
For path graphs, we consider a dynamics starting with all sellers pricing at infinity. Then, we change the price of a seller on one side of the path to zero,
and let each seller in turn best reply to current prices.
We show that when this best-reply process reaches the other end of the path, it ends up in an equilibrium.
Moreover, for at least one of the end points, the revenue when starting from that point is at least half the maximum welfare.
An extension of this dynamics for cycles
converges to an equilibrium with revenue of at least one quarter of the optimal welfare.

After we prove that for some families of graphs there exist Nash equilibria
with good revenue guarantees, the immediate question is whether this generalizes to more complex graphs.
It turns out that non-maliciousness of equilibria can be used to prove bounds on the revenue. 
	In particular, we show that for any tree, every non malicious equilibrium has revenue that is at most $O(\log d)$ factor  smaller than the optimal welfare, where $d$ is the maximum degree. 
	One might hope that such a result can be achieved for general graphs as well, but unfortunately this is not the case. For a
	clique of degree $d$, the loss can be linear in $d$,\footnote{Consider a clique of degree $d$ with all edges with the same value of $1$.
		Such a clique has a non-malicious equilibrium in which one seller prices at $0$ while all others price at $1$.
		The welfare in this non-malicious equilibrium is only linear in $d$, while the optimal welfare is quadratic in $d$, and thus there is a linear loss in $d$.} and therefore we cannot hope for a logarithmic fraction 
	for general graphs.
	Our main positive result presents a refined bound that depends not only on the maximum degree, but also on the \emph{arboricity} of the graph (see, e.g., \cite{NW61}). This additional parameter nicely captures the difference between trees and cliques (and other graphs).
	The arboricity of an undirected graph is the minimum number of forests into which its edges can be partitioned. (For example, a tree has arboricity 1, a cycle has arboricity 2, and a clique of $n$ nodes has arboricity $n/2$.)
	\ignore{ 
	One might expect that these bounds would be a function of some property of the graph, and a natural such property is the maximum degree of any node, denoted by $d$.
	
	Consider a clique of degree $d$ with all edges with the same value of $1$.
	Such a clique has a non-malicious equilibrium in which one seller prices at $0$ while all others price at $1$.
	The welfare in this non-malicious equilibrium is only linear in $d$, while the optimal welfare is quadratic in $d$, and thus there is a linear loss in $d$. Does this mean that the maximum degree is indeed the right parameter, and the loss is always linear in that parameter?
	Consider any star in which the central seller has degree $d$. One can show that for such graph the welfare is no larger than factor $O(\log d)$ from the revenue of any non-malicious equilibrium,\footnote{Essentially, since the central seller prices as a monopolist for the residual demand after other sellers fix their prices, and a monopolist can always get logarithmic fraction of the welfare as revenue.} a much better bound than the one for cliques. What causes the loss here to be much lower than in the clique case, although both have the same maximum degree? We argue that another parameter, the \emph{arboricity} of the graph (define below), is the parameter which explain this large difference.
	Clique has linear arboricity while star has arboricity 1, and the loss is indeed linear in the arboricity, but only logarithmic in the maximum degree.
	
	Thus, our main positive result for general graphs is parameterize using two parameters of the graph:
	the maximum \emph{degree} of a node and the \emph{arboricity} of the graph. REMOVE THE NEXT SENTENCE

Our main positive result is given for general graphs, where we parameterize the complexity of the graph using two parameters:
the maximum \emph{degree} of a node and the \emph{arboricity} of the graph.
The arboricity of an undirected graph is the minimum number of forests into which its edges can be partitioned. (For example, a tree has arboricity 1, a cycle has arboricity 2, and a clique of $n$ nodes has arboricity $n/2$.)}
We prove the following bound on the revenue of any non-malicious equilibrium:\footnote{
We note that the theorem holds also for non-simple graphs that have parallel edges.
}

%


\vspace{2mm}

\noindent \textbf{Theorem:}	\emph{In every graph
	with maximum degree $d$ and arboricity $w$ and every non-malicious equilibrium in it, the total revenue of all sellers is
	at least $\Omega(\frac{1}{w+\log d})$ 	 	
	of the maximum
	welfare.}


%

\vspace{2mm}

\ignore{ 
It is important to note that the approximation guarantee in this result is not solely a function of the degree of the graph. For example, consider a tree with a maximum degree $d$ and a clique over $d$ vertices. These graphs have the same degree, but our theorem shows that the tree exhibits considerably better approximation that is only logarithmic in the maximum degree, as its arboricity is only 1 (compared to a linear arboricity of the clique). }
\ignore{
We  note 
that the above result is not restricted to simple graphs,\footnote{This result also holds when some buyers are interested in a single good, not a pair.}  but holds in graphs with parallel edges as well.
Parallel edges can affect both the degree of the graph and its arboricity.
Nevertheless, our model is described for simple graphs for the simplicity of notation. Note that our impossibility results (see below) hold even under the restriction to simple graphs and do not require parallel edges.
}

{
	The above theorem does not claim anything about the existence of non-malicious equilibria, but only bounds the revenue obtained by such equilibria if they exist.\ignore{ 
		This result implies that for any network for which a non-malicious equilibrium exists, there is an equilibrium with revenue which is at least some fraction of the welfare, and this fraction is given as a function of the maximum degree and the arboricity. }
For this result to imply a bound on the ``price of stability'' in such games,\footnote{Our result above also has a ``price-of-anarchy'' flavor, as we prove that all non-malicious equilibria (when exist) exhibit the approximation guarantee.
} one needs to show that non-malicious equilibria actually exist.
We prove the existence of non-malicious equilibria via the natural heuristic of repeated best-reply dynamics.
Such a dynamics starts with arbitrary prices; At each step, a seller who is not best replying is chosen, and he updates his price to a best reply (breaking ties non-maliciously - towards $0$ price).
Our main result in this context shows that in tree graphs, a specific sequence of best replies does stop at a non-malicious equilibrium.


\vspace{2mm}

\noindent \textbf{Theorem:}	\emph{In every tree, for every initial profile of prices, there exists a sequence of best replies that terminates in a non-malicious Nash equilibrium.
}

\vspace{2mm}


In particular, this result implies the existence of non-malicious equilibria in trees, and together with the approximation theorem above, we conclude that the price of stability in trees (i.e., the approximation achieved by the best equilibrium in every tree) is at least $\Omega(\frac{1}{\log d})$.
Of course, it would be interesting to strengthen this result and show that for any graph such dynamics terminate in a (non-malicious) equilibrium. Based on simulations we have executed, we conjecture that this is indeed the case.


\vspace{2mm}

\noindent \textbf{Conjecture:} 	\emph{In every graph any non-malicious best response dynamics starting from any price vector converges to an equilibrium in polynomial number of steps.}

\vspace{2mm}

This is the strongest version of our conjecture. One can also try to prove weaker versions of the conjecture by restricting the graph structure, order of plays, and relaxing the required time to convergence.



}


We also prove several impossibility results.
We first show that the $\Omega(\frac{1}{\log d})$ approximation for trees is tight, by presenting trees in which the welfare of every non-malicious equilibrium is 
at most an $\Omega(\frac{1}{\log d})$ fraction of the monopolist revenue (and thus of the optimal welfare).
However, this bound is proved using constructions for which there are other (``malicious'') equilibria that achieve a constant approximation.
We therefore strengthen this impossibility result and show instances where {\em all} equilibria, malicious or not, achieve bad results.
In particular, we would like to find out whether the revenue of the best equilibrium is always a constant fraction of the monopolist's revenue.
Our main lower bounds show that the answer is negative, even for trees.
%
%
%


\vspace{2mm}

\noindent \textbf{Theorem:}	
{ \em For graphs with maximum degree $d$:}
\vspace{-1mm}

    $\bullet$ { \em There are {\em graphs} for which the welfare 
    	in every Nash equilibrium is at most $O(\frac{1}{\sqrt{\log d}})$ fraction of the monopolist's revenue. }
			    		
		

\vspace{-1mm}

$\bullet$	{ \em There are {\em trees} for which the welfare 
			in every Nash equilibrium is at most $O(\frac{1}{\log \log d})$ fraction of the monopolist's revenue.}


\vspace{2mm}


%
%
%
%
%

Note that the fact that the revenue in every equilibrium is low compared to the monopolist revenue, implies that it is also low compared to the maximum welfare. Moreover, our bounds are actually stronger, showing that not only the revenue in equilibrium is low, but also the welfare.

\vspace{2mm}

\ignore{
	Finally, we consider reaching equilibria using the natural heuristic of repeated best-reply dynamics.
The dynamic starts with arbitrary prices; At each step, a seller who is not best replying is chosen, and he updates his price to a best reply (breaking ties towards $0$ price).
We conjecture that for any graph, such a dynamic terminates in a (non-malicious) equilibrium. A stronger conjecture is that such an equilibrium will be reached even in polynomial time. Unfortunately, we were not able to prove these conjectures.
We do make a step in this direction by showing that for trees, a specific sequence of best replies does stop at a non-malicious equilibrium (we do not show, however, that this is done in polynomial time)\mbedit{, this result implies that for trees, non-malicious equilibria always exist. A major open problem is whether such equilibria exist in general graphs.}
}

\ignore{ 
	Finally, we would like to compare properties of malicious and non-malicious equilibria. We start by exploring what can be said about the possible equilibrium payoffs of each of the sellers. We show that while equilibrium conditions do not impose any constraints on the payoff of any specific seller, this is not the case for non-malicious equilibria, and for such equilibria we bound the set of possible payoffs for a seller in several scenarios.
We then show that restricting attention to non-malicious equilibria is not always socially advantageous: we show settings where the welfare obtained by the best non-malicious equilibrium is considerably worse than the welfare achievable by other (``malicious'') equilibria. This comes as opposed to our main positive result that implies that the worst non-malicious equilibria can be much better than the worst arbitrary equilibria. Due to space limitation these results appear in the appendix.

} 



\ignore{
\mbcomment{the below is OLD}

Our main result is given for trees and general graphs.
For trees, we show that every non malicious equilibrium is approximately efficient. More specifically, the total revenue (and hence, the total welfare) in every non-malicious equilibrium is at least a $\Omega(\frac{1}{\log d})$ approximation to the optimal welfare, where $d$ is the maximum degree of a node in the tree.
We show that this bound is tight for non-malicious equilibria, by constructing a tree-graph market where every non-malicious equilibrium gains revenue of at most $O(\frac{1}{\log d})$.
Finally, we describe a best-response dynamic that leads to a non-malicious equilibrium in tree graphs.

We actually prove the above positive result for general graphs, and show that non-malicious equilibria gain at least an $\Omega(\frac{1}{\arboricity+\log d})$-approximation, where $\arboricity$ stands for the \emph{arboricity} of the graph (the minimum number of forests into which the graph's edges can be partitioned). Informally speaking, this result shows that as graphs become denser, the quality of the equilibria might deteriorate.

Finally, we give several impossibility results.
Among them, we show that constant approximation is impossible in general graphs; Namely, there are instances where no equilibrium (non-malicious or not) can gain better than $O(\frac{1}{w})$ approximation or $O(\frac{1}{\sqrt{\log d}})$ approximation in a graph with degree $d$ and arboricity $w$.
}



\noindent \textbf{Related Literature:}\\
The analysis of price competition goes back to Cournot \cite{Cournot1838} and Bertrand \cite{Bert83}.
The famous competition model of
\cite{Cournot1838} describes producers who compete through quantities, but in the same work he also described a model of a duopoly selling perfect complements.

\cite{LOS:J} consider buyers interested in perfect complements,  where each bidder is interested in one specific bundle (single-minded bidders);
They study mechanism design for a single seller and buyers with private information, where our focus is on competition between multiple sellers with complete information. Auctions for networks of buyers and sellers with private information were studied in \cite{KM01}.

There is a line of literature studying interactions of sellers and buyers over networks. Among them, \cite{KT08} considered bargaining in networks, where agents can choose whom they want to negotiate with and the solution implies matching of buyers and sellers.  In our model, neighbouring sellers also ``negotiate'' on dividing the value of the buyer on the edge between them, but unlike \cite{KT08},
a seller
can serve several neighbouring buyers simultaneously (with the constraint of non-discriminatory pricing). Sub-game perfect equilibria in bargaining games were studied in \cite{Cor04}.
\cite{KKOPS05} studied general equilibrium models on networks with linear utilities of buyers.

\cite{BabaioffLN13} considered a similar model of sellers that compete in prices, and a trade network that is modeled as graph, with one main difference from this paper: items are substitutes for the buyers.
Namely, a buyer is interested in buying {\em either}  from one seller or from the other seller on the edge, while here we consider buyers that are interested in {\em both}.
This difference implies significant differences in results. For example, there, unlike our paper, pure Nash equilibria hardly ever exist.
For trees, equilibrium utilities of the sellers are uniquely defined, while here they are not.\footnote{ 
Our framework can be also viewed as a variant of graphical games \cite{KLS01}.
There are several known algorithms for computing equilibria in graphical games (see survey \cite{Kea07}).
Another related family of games is Polymatrix games (see \cite{CD11,DFSS14,CDT09}), where an action of a player is played in several simultaneous bimatrix games (the games in our paper have continuous action spaces); This class of games played an important role in showing hardness of equilibrium computation results.
Our goal in this paper is to find equilibria with good economic properties, and compare their properties to the optimal non-strategic solution.
Our paper shows that allowing losing sellers to price ``maliciously'' might be beneficial to society as a whole, this phenomenon was termed the ``Windfall of Malice'' by \cite{BabaioffKP09}.
}

\section{Model}
\label{sec:model}

We study trade networks which are modeled as weighted undirected graphs,
where sellers are represented by nodes and buyers by weighted edges.
The set of edges
is denoted by $E$,
let $N$ be the set of nodes and let $i,j\in N$ denote generic sellers.
Sellers have no costs and can supply any number of items. The sellers are the players in the game, interested in maximizing revenue. Each buyer is single minded and is interested in the bundle of items sold by both sellers that lie on  that edge. The weight $v_{i,j}$ on the edge represents the value of this buyer for buying the bundle $\{i,j\}$. In particular, buyers gain zero value from buying each item alone.
A seller $i$ posts a single price $p_i\geq 0$ that will be available to all incident edges (cannot price discriminate between buyers).
A buyer of bundle $\{i,j\}$ buys if and only if $v_{i,j}\geq p_i+p_j$.

For a given price vector $p$ we denote the set of sold edges by $S(p)$.
The edge $(i,j)$ is {\em tight} if $p_i+p_j= v_{i,j}$ (the sum of prices of the sellers on the edge equals the value of the edge).
We say that {\em an edge $(i,j)$ has slack} if $p_i+p_j<v_{i,j}$.
For a given seller $i$ and an edge $(i,j)$, {\em the slack of seller $i$ on that edge} is
$v_{i,j}-p_j$ (that is, it equals to the difference between the value of the edge and the price of the \emph{other} seller on that edge).

For a given price vector $p$, the {\em revenue} $r_i$ of seller $i$ with price $p_i$ is $p_i\cdot |\{j | (i,j)\in S(p) \}|$, that is, $p_i$ for every adjacent edge that is sold.
The {\em total revenue} is $\sum_i r_i$ and the {\em welfare} is $\sum_{(i,j)\in S(p)} v_{i,j}$.

The sellers compete in a game in which they simultaneously post prices. These prices form a {\em (pure) Nash equilibrium (NE)} if each seller maximizes his revenue given the prices of all other sellers. In this paper we only consider pure Nash equilibria.
\ignore{
\mbcomment{I have moved the observation to the appendix.}
\mbdelete{The following is a useful simple observation about the price set by a best responding seller, it is used repeatedly throughout the paper.
\begin{observation}
	\label{obs:tight-edge}
	For any graph and any vector of prices for the sellers, if seller $i$ is best responding and his revenue is positive, then there is an incident edge that is tight.
\end{observation}

To see this, note that if no edge is tight, a slight increase in price will increase the seller revenue (some edges must be sold as the seller's revenue is positive).
}
} 

For a given network, the {\em revenue of the monopolist} is the supremum of the total revenue over all price vectors for the sellers. 
The {\em maximum welfare} is simply the sum of values of all edges (this can be obtained when every seller prices at $0$).
We study the ratio between the revenue in equilibrium and either the revenue of the monopolist or the maximum welfare (whichever bound is harder to prove):
For our positive results we show that the equilibrium revenue is some fraction of the maximum welfare (which clearly implies  approximation to the monopolist revenue),
while for our negative results we prove inapproximability of 
the monopolist revenue (which implies the same result with respect to the maximum welfare).

\vspace{2mm}

\noindent \textbf{Non Malicious Equilibria:}
We have already observed that the welfare of the worst equilibrium is arbitrarily low.
Thus, we sometimes study an equilibrium refinement in which sellers with $0$ utility price at $0$.
We say that a losing seller (seller with $0$ utility) is {\em non malicious}  if she prices at $0$.
A price vector for the sellers is called {\em non malicious} if every losing seller is non malicious.
This, in particular, implies that every seller's revenue is at least as high as the price he sets.
We use the concept of non-malicious NE when proving our main result: we first show that such equilibria always approximate the optimal welfare, and we then show that they always exist in trees. This implies a positive result on the price-of-stability in trees.

\ignore{
A result of \cite{FFM15} shows that in any graph (and even in hyper-graphs) a non-malicious pure Nash equilibrium exists.

\begin{theorem}[\cite{FFM15}]
	\label{thm:non-mal-NE-exists}
	For any hyper-graph, a non-malicious pure Nash equilibrium exists.
\end{theorem}
}

\ignore{
\section{An Edge: Two Sellers of Complement Goods}
\label{sec:edge}

MOSHE: NEW SECTION

We first consider two sellers of complement goods. There are $d$ buyers, each wants a single pair and has a value for that pair. This is a simple graph of a set of parallel edges between two nodes. We call such a graph a {\em parallel edges graph}. It is well known  that  for any such graph the revenue of a monopolist is at least $\Omega(1/\log d)$ fraction of the welfare.

We show that for such graphs pure NE always exists. This is non-trivial as strategy spaces are infinite, and utilities are not continuous.

\begin{theorem}
	For any parallel edges graph there exists a pure Nash equilibrium.  	
\end{theorem}
\begin{proof}
add proof!!

\end{proof}
		
Nest, we observe that the revenue of any pure NE is at least a linear (in $d$) fraction of the welfare.
\begin{observation}
	For any parallel edges graph, the revenue in any pure Nash equilibrium is at least $\Omega(1/d)$ fraction of the welfare. 	
\end{observation}

Finally, we show the gap between the revenue of the monopolist and the revenue in equilibrium might be very large.
\begin{theorem}
	There exists a parallel edges graph for which the revenue of a monopolist is $\Omega(\sqrt{d})$ times larger than the revenue (and welfare) of every pure NE.
\end{theorem}	
}

\section{Special Families of Graphs: Paths and Cycles}
\label{sec:compute}

In this section, we present positive results for special cases of graphs and algorithms that compute Nash equilibria with high revenue for those cases. These algorithms consist of a particular sequence of best responses of the sellers.
We 
present a linear time algorithm for path graphs that finds an equilibrium that obtains revenue that is at least half of the maximum welfare.
We use it to present
a linear time algorithm for cycles that 
computes an equilibrium with revenue at least a quarter of the maximum welfare.

\subsection{Path Graphs}

We first consider the simplest non-trivial graphs: path graphs.
Our algorithm (Algorithm 1) assumes that sellers are indexed from left to right, with the leftmost seller indexed by $1$.
The algorithm starts with all sellers pricing at $\infty$ and then changes the price of the leftmost seller to $0$.
Then, it goes over the sellers from left to right, letting each seller to choose a best response. If the seller cannot get positive utility, he prices at the value of the previous edge (not at 0 - and thus the equilibrium might be malicious).  
The way we break ties for sellers with $0$ utility ensures that
after seller $i$ best responds,
all previous sellers are still best responding (as seller $i$ leaves no slack to the previous seller).
Our algorithm may achieve terrible revenue when executed in a certain direction on the path (for example, when the weights of the edges are monotonically decreasing), but in such cases we show that running it from for the opposite direction will perform well.
This is essentially since every time the value of the edge is larger than the previous edge, it is sold tightly.
The algorithm thus picks the better equilibrium out of executing the above procedure starting from one end, or from the other.\footnote{We illustrate the algorithm for the path graph of Fig. \ref{fig:simple-networks}: 
The algorithm chooses a price of $0$ for the copper seller, then a price of $1$ for the zinc seller, a price of 5 for iron and a price of 1 for glass. This is a NE with revenue 7 (out of maximum welfare of 8).
By symmetry, the revenue starting from the other end would be the same.
} 

{ 
\begin{algorithm}
{  \small 
		\begin{enumerate}
		\item Initialize all prices to infinity, and the price of the first (leftmost) seller to $0$.
		\item Starting from the second seller, go over the sellers from left to right and let each seller best respond to the current prices.
		If every price of the current seller $i$ gets him $0$ utility,
set his price to the value of edge $(i-1,i)$ (the value of the previous edge).
		\item Mirror the path left to right (mapping node $i$ to become node $n+1-i$) and repeat the above algorithm. Output one of these two price vectors with the higher total revenue. 
	\end{enumerate}
	\caption{A $\frac{1}{2}$-Approximation for path graphs.}
	\label{alg:line-ne}
}
\end{algorithm}

This algorithm terminates in a NE with revenue of at least half  the maximum welfare (proof in Appendix \ref{app:lines}):


\begin{proposition}
\label{prop:line-half}	
	For any path graph, Algorithm \ref{alg:line-ne} terminates in a NE after a linear number of steps. The total revenue in this equilibrium is at least half of the maximum welfare.
\end{proposition}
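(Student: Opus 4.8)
The plan is to analyze the left-to-right sweep directly, exploiting the fact that when seller $i$ is processed every seller to its right is still priced at $\infty$, so only its left edge $(i-1,i)$ can be sold. Hence the price chosen for seller $i$ depends only on $p_{i-1}$: writing the left slack as $s_i=v_{i-1,i}-p_{i-1}$, the seller sets $p_i=s_i$ when $s_i>0$ (making the left edge tight and sold) and $p_i=v_{i-1,i}$ otherwise. Since each $p_i$ is set once and never revisited, each of the forward and mirrored passes is a single sweep, which gives the linear running time. Two structural facts drive everything else, both immediate from the recursion: every sold edge is \emph{tight}, so the total revenue of a pass equals the total value of the edges it sells, $\sum_i r_i=\sum_{(i-1,i)\in S(p)}v_{i-1,i}$; and no edge is left with slack, i.e.\ $p_{i-1}+p_i\ge v_{i-1,i}$ for every edge, with equality exactly when the edge is sold.

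For the equilibrium claim I would verify that each seller is best-replying in the final profile by a case analysis on its two incident slacks $s^L_i=v_{i-1,i}-p_{i-1}$ and $s^R_i=v_{i,i+1}-p_{i+1}$. The no-slack fact yields $s^L_i\le p_i$ and $s^R_i\le p_i$, so no deviation can both retain the currently sold edges and raise the price. The delicate point is ruling out a downward deviation that grabs a currently unsold edge, and here the tie-breaking rule is exactly what is needed: when the next seller cannot sell, setting its price to the previous edge value forces $i$'s slack on the shared edge to be $0$ (and $s^L_i\le 0$ on the left side). Thus whenever an incident edge of $i$ is unsold, $i$'s slack on it is non-positive, so grabbing it is never profitable; comparing the at-most-four options (sell both, only left, only right, nothing) then shows the current price attains the best-reply revenue in each case. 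This also gives the invariant advertised in the text: processing $i$ alters only $p_i$ and leaves seller $i-1$ best-replying, so the profile produced by the sweep is a Nash equilibrium.

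For the revenue bound I would start from the identity ``revenue $=$ value of the sold edges.'' The key monotonicity is that $p_i\le v_{i-1,i}$ always, so if $v_{i,i+1}>v_{i-1,i}$ then $p_i<v_{i,i+1}$ and edge $(i,i+1)$ is sold tightly; that is, the forward pass sells every edge whose value strictly exceeds that of its left neighbour, and symmetrically the mirrored pass sells every edge exceeding its right neighbour (with the first and last edges always sold in the respective directions). It then remains to combine the two passes: an edge fails to be collected in a given direction only if it does not exceed the corresponding neighbour, so an edge uncollected in \emph{both} directions is a (weak) local minimum. I would charge the value of each such edge to an adjacent, larger edge that is collected, and conclude that the better of the two passes collects at least half of $\sum_{(i-1,i)}v_{i-1,i}$, the maximum welfare.

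The main obstacle is exactly this final combination step. A single direction only yields the weak estimate ``$\ge$ sum of the ascending edges,'' which can be far from half (on a decreasing path the forward pass collects essentially one edge), so the whole content of the statement is that the opposite direction compensates. Making the charging rigorous — matching each locally minimal, doubly-uncollected edge to distinct larger collected neighbours without double-counting, and in particular handling plateaus of equal-valued edges, where neither direction captures the interior — is where the real difficulty lies; by contrast, termination, the linear-time bound, and the equilibrium verification are comparatively routine.
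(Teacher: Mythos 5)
Your termination and equilibrium analysis coincides with the paper's and is in fact more explicit: the recursion $p_i=s_i$ when $s_i>0$ and $p_i=v_{i-1,i}$ otherwise, the tightness of every sold edge, the no-slack invariant $p_{i-1}+p_i\ge v_{i-1,i}$, and the observation that every unsold edge leaves \emph{both} endpoints with non-positive slack (so no downward deviation can grab it) are all correct, and together they do verify the Nash property that the paper dispatches with a one-line appeal to the tie-breaking rule. Your plan for the revenue bound is also the paper's plan: two sweeps, ascent edges sold tightly, charge uncollected edges to collected neighbours. The one step you leave open is exactly the step the paper fills with a stronger lemma: the paper claims that an edge whose value is \emph{weakly} at least the previous edge's value is sold, not merely strictly larger as you derive from $p_i\le v_{i-1,i}$. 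Granting the weak version, the combination is easy -- every edge except the strict local minima is collected in at least one direction, weak local maxima are collected in both, and the paper matches the first strict local minimum to the first edge and each subsequent one to the last weak local maximum preceding it; since each strict local minimum is dominated by its matched, doubly-counted edge, the two runs together collect the total weight and the better run collects half. Plateaus cause no difficulty there because interior plateau edges satisfy the weak inequality.

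However, your hesitation about plateaus is not a technicality you failed to push through: the weak lemma is false under the algorithm's literal tie-break, so the missing charging step genuinely cannot be completed as stated. Take three edges of value $1$. The forward sweep gives prices $(0,1,1,1)$: seller $3$ has zero slack, every price earns him $0$, so by the stated rule he prices at the previous edge's value and kills the middle edge, and seller $4$ likewise kills the last; only the first edge sells, for revenue $1$, and by symmetry the mirrored run also collects $1$, against maximum welfare $3$. This defeats both your strict-version charging and the paper's weak claim (and, taken literally, the half bound itself) on exact plateaus. The repair is small: amend the zero-utility tie-break so that a seller whose slack on the previous edge is \emph{exactly} zero prices at $0$, selling that edge tightly at zero revenue (on the $(1,1,1)$ instance this yields prices $(0,1,0,1)$ and full revenue $3$). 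One checks that this preserves $p_i\le v_{i-1,i}$ and keeps all previous sellers best responding, after which ``weakly larger than the predecessor implies sold tightly'' holds and the paper's matching argument finishes the proof. So your proposal is essentially the paper's proof minus its weak-inequality lemma; the obstacle you flagged is real, and it is resolved not by a cleverer charging scheme but by fixing the tie-break (or assuming the degenerate zero-slack ties away), a point on which the paper's own write-up is also imprecise.
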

\ignore{ 
\begin{proof}
	As pointed up above, the algorithm terminates in an equilibrium.
	We argue that one of the two runs attains  revenue of at least half the maximum welfare.
	
	Note that by this algorithm, every sold edge is tight, thus the total revenue from that edge equals the value of the edge.
	Also observe that if an edge has value that is not smaller than the value of the previous edge, it must be sold. Additionally, the first edge is always sold.
	Consider the total value of all edges that are no-smaller than the previous edge, plus the first edge, in each of the two directions.
	Every edge is counted in one of these directions, except for strict local minima (i.e., edges that have strictly lower weight than its two neighbouring edges) that might not be counted. Edges that are weakly local maxima (i.e., have weights at least as high as their neighbouring edges) are counted in both directions.
Now, pick one of the directions and match the first strict local minimum to the first edge, and from there, match each strict local minimum to the last local maximum prior to it.
Clearly, the value of each strict local minimum is at most the value of the edge matched to it.
It follows that the sum of sold edges in the two executions together is at least the sum of the weights in the path.
Thus, one of the directions obtains at least half the total weight.
\end{proof}

} 

\subsection{Cycles}

	Our next result shows that in any cycle graph there is always a NE with high revenue. Moreover, we show how to compute such a NE in linear time; The algorithm runs the above path-graph algorithm for one full round on all sellers, and then initiates a best-response dynamic that will end up in such an equilibrium.


Fix an arbitrary seller and mark him as seller $1$. Going clockwise along the cycle starting from seller $1$, number the other sellers  $2$ to $n$.
We present the algorithm as Algorithm \ref{alg:cycle-ne}.

\begin{algorithm}
{  \small 
		
	\begin{enumerate}
		\item Initialize all prices to be infinity, initialize the price of seller $1$ to $0$.
		\item Going clockwise: Starting from the second seller till the first seller (inclusive), go over the sellers clockwise and let each seller best response to the current prices.
	     If every price of the considered seller $i$ gets him $0$ utility\ignore{(no price gives positive utility)}, set his price to the value of edge $(i-1,i)$ (the value of the previous edge in the cycle, with edge $(n,1)$ being the edge prior to $1$).
		\item Run a best response dynamics (by iteratively letting a seller that is not best responding to update his price) till it stops. 
		\item Mirror the cycle and repeat the above algorithm (i.e. the new order after seller $1$ is $n,n-1,\dots,2,1$). Output one of these two price vectors with the higher total revenue.
	\end{enumerate}
	\caption{A $\frac{1}{4}$-Approximation for Cycles}
	\label{alg:cycle-ne}
}
\end{algorithm}

\begin{theorem}
	\label{thm:cycle}
	For any cycle, Algorithm \ref{alg:cycle-ne} terminates in a Nash equilibrium after linear number of steps. The total revenue in this equilibrium is at least one quarter of the maximum  welfare.
		
\end{theorem}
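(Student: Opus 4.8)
My strategy is to reduce the cycle analysis to the already-established path result (Proposition~\ref{prop:line-half}) by ``cutting'' the cycle open, then carefully account for the revenue lost in doing so. The overall structure will mirror the two claims of the theorem: (i) the process terminates in a Nash equilibrium in linearly many steps, and (ii) the resulting revenue is at least a quarter of the maximum welfare.

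\textbf{Termination and equilibrium.}
For the first part, I would argue that Step~2 (the clockwise best-response sweep) behaves exactly like the path algorithm applied to the $n$ sellers opened up at seller~$1$: since seller~$1$ is pinned at price $0$, each subsequent seller leaves no slack to his predecessor (by the same tie-breaking argument invoked for paths), so after the sweep all sellers $1,\dots,n$ are best-replying \emph{with respect to their clockwise predecessor}. The only edge where equilibrium can fail is the ``closing'' edge $(n,1)$, because seller~$1$ was fixed at $0$ and never got to react to seller~$n$'s final price, and seller~$n$ may also wish to react to the realized $p_1$. Step~3 then runs unrestricted best-response dynamics. I expect the key lemma here to be that only a constant (or at most $O(1)$ per seller) number of additional updates is needed before the dynamics halt --- essentially because after the sweep the prices are already ``almost'' an equilibrium, with the only tension localized near the closing edge. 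Establishing that this cleanup terminates (rather than cycling) in linearly many steps is the first technical obstacle; I would try to exhibit a monotone potential (e.g. that prices only move in one direction, or that the set of tight edges stabilizes) that the local repairs respect.

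\textbf{The revenue bound.}
For the quarter guarantee, the idea is that a cycle on $n$ edges can be split into two paths, and Proposition~\ref{prop:line-half} gives a half-approximation on a path. Concretely, I would compare the revenue of the equilibrium reached by Algorithm~\ref{alg:cycle-ne} against the maximum welfare $W = \sum_{(i,j)} v_{i,j}$. The natural approach is to show that the revenue collected equals the total value of all \emph{tightly sold} edges, and that --- as in the path proof --- every edge whose value weakly exceeds its clockwise predecessor's is sold tightly. Running the sweep in both directions (Step~4) and taking the better of the two should let me charge every edge except the strict local minima, exactly as in the path argument; the loss relative to the path's factor of $\tfrac12$ comes from the single cut edge $(n,1)$, which may be sacrificed in both directions. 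Taking the maximum over two directions yields $\tfrac14$ rather than $\tfrac12$.

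\textbf{Main obstacle.}
I expect the genuine difficulty to lie not in the counting argument for revenue but in controlling Step~3: proving that the best-response cleanup after closing the cycle converges at all, and does so in linear time, without destroying the tightness structure built up in Step~2 that the revenue bound relies on. The worry is that a repair near the closing edge could cascade around the cycle and untighten previously-sold edges. I would therefore aim to prove an invariant of the form ``once seller $i$ is best-replying in Step~3, he stays best-replying,'' so that the dynamics sweep the repair in one direction and stop, and to verify that the final tight edges still dominate the welfare up to the factor $\tfrac14$ inherited from the two-directional path bound.
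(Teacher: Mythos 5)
Your high-level plan coincides with the paper's (one sweep of the path algorithm, a localized best-response repair, running both orientations, and a factor-$\frac14$), but both technical cores are missing, and the invariant you propose for Step~3 is false as stated. Once a seller best responds in the cleanup he makes the edge with his \emph{predecessor} tight, restoring the predecessor to best response, but his own update can break best response for his \emph{successor} --- including sellers who were best responding after the sweep. So ``once best-replying, stays best-replying'' fails; the correct statement (the paper's claims in Appendix~\ref{app:cycle}) is that at every moment at most one seller, the next in order, may fail to best respond, i.e.\ the repair is a moving frontier. That alone does not prevent the frontier from travelling around the cycle indefinitely, and no monotone potential of the kind you suggest is used. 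The paper closes this with a parity argument: along the dynamics every updated edge becomes tight, so a price change of $\delta$ propagates with alternating sign; on an even cycle the net change after a full loop is zero, so seller~$1$ does not move again, and on an odd cycle one derives the contradictory pair $2(q-\delta)>q$ and $q>2(q-\delta)$. This is the argument you would need to supply, and it also yields the linear step count. (A small misreading: in Algorithm~\ref{alg:cycle-ne} seller~$1$ \emph{does} best respond at the end of the sweep, so the unique potentially non-best-responding seller after Step~2 is seller~$2$, not seller~$1$ at the closing edge $(n,1)$.)

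The revenue accounting has a genuine gap as well. The edge that can be sacrificed is not the cut edge $(n,1)$ but the edge following the last seller to update in the cleanup, which --- since the frontier can travel --- may sit anywhere on the cycle. More importantly, ``we lose one edge, hence a factor $2$'' is not valid on its own: that edge could carry almost all of the welfare. The paper's Claim~\ref{claim:cycle-app} proves the quantitative statement your charging scheme needs, namely that the value $Y$ of the dropped edge is at most the revenue that remains, by a three-case analysis: if $Y<X$ (the sold preceding edge) at most half the revenue is lost; if $Y$ is unsold with endpoint prices $p,q$ then $p+q>Y$ while the neighboring sold edges $X$ and $Z$ already contribute at least $p+q$; and if $Y$ is sold but not tight, the equilibrium conditions give $2p>Y-q$ and $2q>Y-p$, hence $(3/2)(p+q)>Y$, while the revenue is at least $2(p+q)$. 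Combining this with Proposition~\ref{prop:line-half} for the better of the two orientations gives the factor $\frac14$; without it, your argument gives no bound at all. You also implicitly reuse the path fact that every edge weakly exceeding its predecessor is sold tightly \emph{after} Step~3, which requires rechecking since the cleanup changes prices; the paper instead argues directly that the cleanup only adds tightly sold edges, except for the single dropped one.
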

We defer the details of the proof to Appendix \ref{app:cycle}, and we next present an informal description of the algorithm together with an outline of the  proof.
First, the algorithm essentially runs Algorithm \ref{alg:line-ne} in one direction, starting with a price of 0 for seller $1$ and going clockwise, letting each seller best respond (pricing at the value of the prior edge if his utility is 0) including the first seller 
as the last seller to change his price. We note that the first seller is now best responding (as the edge to the second seller had no slack, so he could not get any utility out of it).
After that cycle completes, we argue that there is at most one seller (the second seller) that is not best responding.
From this point on, the algorithm runs a best response dynamics till it stops.
We argue that at each point, the only seller that might not be best responding is the next seller, and once he best responds, either the sellers are in a NE or the edge with the previous seller is tight. As we show, it follows that this dynamics will stop after a linear number of steps. We also show that the algorithm running either clockwise or counter-clockwise gets revenue that is at least one quarter of the maximum welfare: the path-graph algorithm gets at least half, and from then edges are only added (and sold tightly), except the edge following the last seller to update his price that might be dropped. We show that the revenue of the dropped edge is no larger than the remaining revenue,
thus we lose at most half the revenue obtained by the path algorithm.

\section{A Positive Result} 
We start by presenting our main positive result, showing that the revenue in some NE is at least some fraction of the maximum welfare (and thus also some fraction of the monopolist revenue). We prove this claim by showing that this is true for any non-malicious NE.
For any network for which a non-malicious equilibrium exists (like in trees, see Theorem \ref{thm:best-res-dyn-tree}), this implies that at least the same fraction can be obtained in the highest revenue equilibrium.\footnote{
This result actually holds for a more general graph model, where some buyers are interested in a single good, not a pair, and also for graphs that contain parallel edges (that is, there might be multiple buyers interested in each pair of goods).
}

The bound we present is in terms of the maximum degree and the  arboricity of the graph.
The {\em maximum degree} is the largest degree of any node: the maximal number of buyers that demand an item from the same seller.
The {\em arboricity} of an undirected graph is the minimum number of forests into which its edges can be partitioned.
Equivalently, it is the minimum number of spanning forests needed to cover all the edges of the graph.

To gain some intuition for the notion of arboricity, here are some simple examples. A forest (and a tree)
has arboricity $1$.
A cycle has arboricity $2$ as it clearly cannot be spanned by a single forest, 
but it can be partitioned into two trees, a spanning tree and the missing edge.
A clique of size $n$ has arboricity $\lceil n/2 \rceil$: the arboricity is at least $n/2$ as any tree has at most $n-1$ edges and the clique has $n(n-1)/2$ edges, and it is at most $\lceil n/2 \rceil$ as it can be covered by $\lfloor n/2 \rfloor$ Hamiltonian paths, plus a star for odd $n$.

\begin{theorem}
	\label{thm:graph-poa}
	In every graph with maximum degree $d$ and arboricity $w$ and every
non-malicious NE in it, the total revenue of all sellers is
	at least $\frac{1}{2(w+1+\ln d)}$ 	
	fraction of the maximum welfare.
\end{theorem}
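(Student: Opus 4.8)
The plan is to charge the value of every edge---sold or not---to the revenues of its two endpoints, losing a factor $w$ from the arboricity and a factor $1+\ln d$ from a harmonic-sum (monopoly-pricing) argument.

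First I would observe that in a Nash equilibrium each seller $i$ is solving a single-price monopoly problem. Fixing the prices $p_{-i}$ of the others, seller $i$ sells on edge $(i,j)$ at price $p$ iff $p\le v_{i,j}-p_j$, so writing $s_{i,j}:=\max(0,\,v_{i,j}-p_j)$ for the nonnegative slack of $i$ on that edge, the equilibrium revenue is $r_i=\max_{p\ge 0} p\cdot|\{j:s_{i,j}\ge p\}|$. Sorting the slacks $s_{(1)}\ge s_{(2)}\ge\cdots$ and testing the price $p=s_{(\ell)}$ gives $r_i\ge \ell\,s_{(\ell)}$, hence $s_{(\ell)}\le r_i/\ell$, and summing yields $\sum_{j} s_{i,j}\le r_i\sum_{\ell=1}^{d_i}\tfrac1\ell=H_{d_i}\,r_i\le(1+\ln d)\,r_i$, using $d_i\le d$. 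This is the step that produces the $\ln d$ term.

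Next I would use the arboricity to orient the edges. Since the edges partition into $w$ forests and each forest can be rooted and oriented child-to-parent (out-degree at most $1$ per forest), all edges can be oriented so that every vertex has out-degree at most $w$. For an edge oriented $i\to j$ I would bound its value by $v_{i,j}\le s_{j,i}+p_i$: indeed $s_{j,i}=\max(0,\,v_{i,j}-p_i)$ equals $v_{i,j}-p_i$ when $v_{i,j}\ge p_i$ and is $0$ otherwise, so the inequality holds in both cases. Summing over all (oriented) edges,
\[
W=\sum_{i\to j} v_{i,j}\ \le\ \sum_{i\to j} s_{j,i}\ +\ \sum_{i\to j} p_i .
\]
For the price term, each tail $i$ has at most $w$ out-edges and non-maliciousness guarantees $p_i\le r_i$ (a seller pricing above $0$ must be selling, so its revenue is at least its price), giving $\sum_{i\to j}p_i\le w\sum_i p_i\le wR$. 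For the slack term I would regroup by the head $j$: the edges pointing into $j$ form a subset of $j$'s incident edges, so $\sum_{i\to j}s_{j,i}\le\sum_j\sum_{k\sim j}s_{j,k}\le(1+\ln d)R$ by the first step.

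Combining, $W\le(w+1+\ln d)R$, i.e.\ $R\ge\frac{1}{w+1+\ln d}\,W$, which is even slightly stronger than the claimed $\frac{1}{2(w+1+\ln d)}$; the factor $2$ leaves room for a looser accounting. The step I expect to be the real obstacle is the treatment of \emph{unsold} edges, whose value is captured by no seller's realized revenue; the argument must charge $v_{i,j}$ to a combination of one endpoint's slack and the other endpoint's posted price. The asymmetric inequality $v_{i,j}\le s_{j,i}+p_i$ together with the bounded-out-degree orientation is exactly what makes this charging possible while counting each seller's price at most $w$ times, and it is here that non-maliciousness is indispensable: otherwise a losing seller could post an arbitrarily high price (as in the all-$\infty$ equilibrium) and the price term would blow up.
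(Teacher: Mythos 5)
Your proof is correct, and it rests on the same two key ingredients as the paper's: the harmonic-sum bound on a best-responding seller's slacks (the paper's Claim~\ref{claim:log-approx-demand-curve}, proved by exactly your deviation to the $\ell$-th largest slack) and the arboricity-based assignment of edges to endpoints with multiplicity at most $w$, combined with the non-maliciousness inequality $p_i \le r_i$. Where you genuinely differ is the decomposition. The paper splits the edge set by a half-value threshold: an edge is ``all-high'' if both endpoints price above half its value, and such edges are charged entirely to prices via the arboricity mapping (costing $2wR$, the factor $2$ arising because a high price is only guaranteed to exceed \emph{half} the value), while all remaining edges are charged entirely to the harmonic bound (costing $2(1+\ln d)R$, the factor $2$ again from value $\le$ twice slack). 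You avoid the case split altogether by charging every edge, sold or unsold, to \emph{both} endpoints at once via the pointwise inequality $v_{i,j} \le s_{j,i} + p_i$, which is exact whenever $v_{i,j}\ge p_i$; the orientation controls how often each price is counted, and regrouping the slacks by head invokes the harmonic bound once per vertex. This eliminates both factors of $2$ and yields $W \le (w+1+\ln d)R$, a factor-$2$ sharpening of the bound stated in Theorem~\ref{thm:graph-poa}. Your closing diagnosis is also accurate: non-maliciousness enters only through $p_i \le r_i$ (needed precisely for unsold edges whose tail would otherwise be free to price arbitrarily high), while the slack step uses only the equilibrium condition---the same division of labor as in the paper, just packaged edge-by-edge rather than by a two-class partition.
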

\begin{proof}
	Fix a graph and a non malicious Nash equilibrium in it.
	We will say that a vertex $u$ is {\em low for the edge $(v,u)$} if $u$'s price in the equilibrium is at most half of the value of the edge $(v,u)$.
	We will say that an edge is {\em all-high} if both of the vertices on it are not low for the edge (in this case clearly the edge does not buy).
	Let $E^H$ denote the set of all-high edges. For a vertex $v$ let $E^v$ denote the set of edges $(v,u)$ (edges adjacent to $v$) such that $u$ is low for $(v,u)$.
	Observe that since every edge is either all-high or low for some node, 	the set of all edges $E$ is covered as follows: $E=E^H \cup (\cup_{v} E^v)$.
	Thus $$\sum_{(v,u)\in E} v_{(v,u) }\leq  \sum_{(v,u)\in E^H} v_{(v,u) } + \sum_{v} \sum_{(v,u)\in E^v} v_{(v,u)} $$
	Denote the total revenue by $r$. To complete the proof we bound each of the two terms separately, the first by $2w\cdot r$ and the second by  $2(\ln d +1)\cdot r$,
	together proving the theorem.

	We start by bounding the edges that are not all-high. Let $r_v$ denote the revenue of $v$. This claim is well known (e.g., \cite{GHKSW06}) and we present the proof for completeness in Appendix \ref{app:pos-proofs}.\footnote{
The claim essentially says that a single price can gain revenue of at least a logarithmic fraction of the total demand (in our case, it is the residual demand given the prices of the neighbours- which is at least half the value of each edge as these edges are not all-high.).
}
\begin{claim}
\label{claim:log-approx-demand-curve}
		For every node $v$ it holds that $\sum_{(v,u)\in E^v} v_{(v,u)} \leq r_v \cdot 2(\ln d +1) $
\end{claim}
By the claim we derive that
	$$\sum_{v} \sum_{(v,u)\in E^v} v_{(v,u)} \leq \sum_{v} 	r_v \cdot 2(\ln d +1)  = 2(\ln d +1)\cdot r$$
	We next bound the total value of the all-high edges.
	To take care of the total weight of the all-high edges
	we will use the fact that in a graph of arboricity $w$ there exists a mapping from edges to vertices such that every edge is mapped to one of its two vertices and no vertex has more than $w$ edges mapped to it (this is true as we can just root each tree and map every edge to its child node).
	Since the edge is all-high and the equilibrium is non malicious, the price -- and thus also the revenue -- of each of the two vertices on the edge is at least half the value  of the edge.
	Summing, again, over all vertices, we get that the
	total weight of all all-high edges is at most $2w$ the total revenue of all vertices.
	Stated formally, consider the mapping $M$ from $E^H$ to the set of nodes $N$ that maps each edge to an adjacent node and never maps more than $w$ edges to the same node. For every node $v$ that incident at least one high edge, define $u^*(v)$ to be a vertex such that for all $u$ such that $(v,u)\in E^H $ we have
$v_{(v,u^*(v))}\geq v_{(v,u)}$. 
	Then, it holds that
\begin{align}
	\sum_{(v,u)\in E^H} v_{(v,u) } & \leq
	\sum_v \sum_{u|M(v,u)=v} v_{(v,u) } \leq
	\sum_{v | \exists u \; M(v,u)=v} 	w\cdot v_{(v,u^*(v))} \\
     & \leq
	w\cdot \sum_v 2\cdot r_v \leq 2w\cdot r
\end{align}
\end{proof}

	For any network for which  a non-malicious equilibrium exists, the theorem ensures that some Nash equilibrium has high revenue. Thus, it can be viewed as a ``Price of Stability'' result for such networks.
In particular, it bounds the price-of-stability for trees, as for any tree a non-malicious equilibrium exists by Theorem \ref{thm:best-res-dyn-tree}.
It also ensures that every non-malicious equilibrium has high revenue, thus can be viewed as a ``Price of Anarchy'' result for non-malicious equilibria.

The theorem implies a bound on trees that is only logarithmic in the maximum degree.
\begin{corollary}
	\label{cor:tree-lb}
	In any tree with maximum degree $d$ and every non-malicious equilibrium in it, the total revenue of all sellers is $\Omega(1/\ln d)$
	fraction of the maximum welfare.
\end{corollary}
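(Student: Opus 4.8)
The plan is to derive this directly from Theorem \ref{thm:graph-poa} by plugging in the arboricity of a tree, so the entire argument reduces to identifying $w$ for a tree. The only structural fact I need is that every tree (indeed every forest) has arboricity $w = 1$: a tree is already a single forest, so its edge set is trivially partitioned into one forest (itself), and since it contains at least one edge it cannot be covered by zero forests. Hence $w=1$ is exactly the arboricity.

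With this in hand, I would instantiate Theorem \ref{thm:graph-poa} with $w = 1$. The theorem asserts that in every non-malicious NE on a graph of maximum degree $d$ and arboricity $w$, the total revenue is at least a $\frac{1}{2(w+1+\ln d)}$ fraction of the maximum welfare; substituting $w=1$ yields the lower bound $\frac{1}{2(2+\ln d)}$ on this fraction. It then remains only to observe that $\frac{1}{2(2+\ln d)} = \Omega(1/\ln d)$: for $d \ge 2$ we have $2 + \ln d \le 4\ln d$, so $\frac{1}{2(2+\ln d)} \ge \frac{1}{8\ln d}$, and for the finitely many remaining small cases the quantity is an absolute constant. This gives the claimed $\Omega(1/\ln d)$ fraction.

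Since the corollary is an immediate specialization of the preceding theorem, there is no genuine obstacle here; the only point to get right is the reduction of the arboricity-dependent denominator $2(w+1+\ln d)$ to the clean $2(2+\ln d)$ by setting $w=1$, after which the all-high term $2w\cdot r$ in the proof of Theorem \ref{thm:graph-poa} simply becomes $2r$. If one preferred a self-contained argument avoiding the general theorem, one could replay its proof specialized to trees: root the tree, map each edge to its child endpoint so that every vertex receives at most one mapped edge (directly witnessing $w=1$ for the all-high edges), and combine this with Claim \ref{claim:log-approx-demand-curve}, which bounds the non-all-high edges by $2(\ln d + 1)\cdot r$. Either route establishes the corollary.
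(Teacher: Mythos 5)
Your proposal is correct and matches the paper exactly: the corollary is stated as an immediate specialization of Theorem \ref{thm:graph-poa} with $w=1$ (a tree is a single forest), giving the bound $\frac{1}{2(2+\ln d)} = \Omega(1/\ln d)$, which is precisely your derivation. The numerical check $2+\ln d \le 4\ln d$ for $d\ge 2$ is also fine, since it amounts to $\ln d \ge 2/3$, which holds as $\ln 2 > 2/3$.
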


\section{Impossibility Results}
\label{sec:impossibilities}
Theorem \ref{thm:graph-poa} gives a positive result, ensuring that the revenue in equilibrium is some fraction of the welfare.
Yet, it might be possible that an improved bound can be shown.
\ignore{
While equilibria that does not satisfy the non-malicious condition  might have arbitrarily bad revenue, the best revenue equilibrium might have high revenue.

Therefore, one might like to understand the best equilibrium revenue both when not restricting to non-malicious NE, and when restricting to this smaller set.

\subsection{Arbitrary Nash equilibrium}
Proposition \ref{prop:trees-non-mal-lb} demonstrated that the highest revenue (unrestricted) equilibrium might have much higher revenue than the best non-malicious equilibrium.
As the set of all equilibria is much larger than the set of non malicious NE, one might hope that some equilibrium is always of high revenue.
}
Specifically, we would like to answer the following question: Is the revenue of best equilibrium always a constant fraction of the revenue of the monopolist?
Unfortunately, the answer is no, even for trees.
We first present a lower bound 
for general graphs. This bound is slightly weaker than the logarithmic bound (in the maximum degree)
of Theorem \ref{thm:graph-poa}.
We then present a lower bound for trees, showing that
the best equilibrium revenue is not necessarily a constant fraction of the monopolist revenue.
For missing proofs see Appendix \ref{app:lower-bounds}.

\subsection{General Graphs}
We start with the lower bound for general graphs. For proving the theorem, we construct graphs with arboricity that is much smaller than their maximum degree.
\begin{theorem}
	\label{thm:general-lb}
	There exists  a family of graphs with maximum degree $d$ and arboricity $w$ for which $w^2=O(\ln d)$ and the revenue that a monopolist seller can get is factor $\Omega(w)$ larger than the revenue (and welfare) in any Nash equilibrium. It terms of $d$, the factor can be as large as $\Omega(\sqrt{\ln d})$ when $w^2=\Theta(\ln d)$. 
\end{theorem}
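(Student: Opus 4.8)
The plan is to construct, for each target value $w$, a graph $G_w$ whose arboricity is $\Theta(w)$, whose maximum degree is $d$ with $\ln d = \Theta(w^2)$, and in which a monopolist extracts revenue $\Omega(w)$ times the welfare of \emph{every} Nash equilibrium. I would build $G_w$ in $w$ \emph{levels}, each level being a bipartite ``fan'' gadget whose edge values all lie on a single geometric scale $V_1 \gg V_2 \gg \cdots \gg V_w$ (say $V_\ell \propto 2^{-\ell}$), glued along shared ``hub'' sellers so that a seller who prices to win its high-value (small-$\ell$) edges is forced to abandon the low-value (large-$\ell$) edges it also owns. To pin the parameters I would let the branching factor grow with the level, $b_\ell = \exp(\Theta(\ell))$, so that degrees accumulate multiplicatively to $\prod_\ell b_\ell = \exp(\Theta(\sum_\ell \ell)) = \exp(\Theta(w^2))$, fixing $d$, while the hubs are shared across the $w$ levels so that the densest subgraphs have density $\Theta(w)$. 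I would then verify arboricity $\Theta(w)$ directly through the Nash--Williams density characterization, checking that every subgraph $H$ has $O(w)\cdot(|V(H)|-1)$ edges: the many leaves have low degree and do not raise density, whereas the hubs shared among $\Theta(w)$ gadgets are exactly what realizes density $\Theta(w)$.

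For the monopolist bound I would exhibit an explicit price vector aligned with the geometric scales. Since the monopolist sets all prices jointly there is no coordination failure: it can split the value $V_\ell$ of each level-$\ell$ edge between its two endpoints so that all edges at all levels are simultaneously tight and sold. Summing $p_i$ times the number of sold edges over all sellers, and normalizing each level to carry unit value, gives monopolist revenue $R = \Omega(w)$. This is the routine direction.

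The crux, and the main obstacle, is the upper bound on \emph{every} Nash equilibrium, malicious ones included. Here the mechanism is that a single posted price is ``well matched'' to only one geometric scale: if seller $i$ posts $p_i$ comparable to $V_\ell$, then its edges at levels $\ell'<\ell$ leave large slack to its partners, while its edges at levels $\ell'>\ell$ cannot sell at all since $p_i$ already exceeds $V_{\ell'}$. On top of this, the gadget's coordination-failure engine — exactly the phenomenon of the path example in Figure~\ref{fig:simple-networks}, where two sellers sharing a high-value edge cannot both collect its value — prevents any single level's value from being fully realized in equilibrium, and the geometric spacing prevents the loss from being recouped at neighbouring levels. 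Turning this into a proof that no equilibrium, including malicious ones in which losing sellers post blocking prices, realizes welfare more than $O(1)$ (one level's worth) is the delicate part: I would cast it as a charging argument that assigns each sold edge to the scale of the larger of its two endpoint prices and bounds the total chargeable value level by level, invoking the best-response and tightness conditions to rule out a seller winning simultaneously at two well-separated scales. I expect that robustness against malicious blocking is precisely what forces the geometric spacing and the growing branching factor, and hence the reason the attainable gap is only $w=\Theta(\sqrt{\ln d})$ rather than the $\Theta(\ln d)$ of the price-of-anarchy bound in Theorem~\ref{thm:graph-poa}.

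Finally I would collect the parameters: arboricity $w$, maximum degree $d$ with $\ln d = \Theta(w^2)$, and monopolist revenue $\Omega(w)$ times the best equilibrium welfare, i.e. a gap of $\Omega(w)=\Omega(\sqrt{\ln d})$, as claimed.
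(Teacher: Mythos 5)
Your proposal identifies the right high-level ingredients (hub sellers attached to many-scale gadgets, a monopolist who zeroes the hubs, an argument that equilibria cannot serve more than one scale), but it leaves exactly the step that carries the theorem unproven, and the paper's way of closing it is quite different from what you sketch. The paper's construction is a clique of $w+1$ hub sellers with clique edges of value $4$, each hub carrying one harmonic gadget ($d$ leaf edges of values $1,1/2,\dots,1/d$). The equilibrium upper bound does \emph{not} come from a charging argument about a single seller winning at two well-separated scales; it comes from the clique edges acting as a \emph{temptation} that destroys any low-price profile: a hub pricing at most $1/w$ earns at most $1+w\cdot(1/w)=2$ (at most $1$ from its gadget, since a single price $p$ extracts at most $p\cdot\lfloor 1/p\rfloor\le 1$ from a harmonic gadget, and at most $1$ from its $w$ clique edges), so if two hubs priced at most $1/w$, either could deviate to tighten the value-$4$ edge between them and earn more than $2$. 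Hence every NE, malicious or not, has at most one low hub; all other hubs sell only gadget edges of value at least $1/w$, worth $O(\ln w)$ each, and the total NE welfare is $O(w^2+\ln d)$ against monopolist revenue $(w+1)\ln d$. Your sketch contains no analogous deviation mechanism forcing almost all hubs high in \emph{every} equilibrium — ``coordination failure as in the path example'' rules out specific profiles, not all of them — and this is precisely the part you flag as delicate but do not supply.

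Two further points where your plan would break as stated. First, the parameter bookkeeping: maximum degree is a per-node count and does not ``accumulate multiplicatively'' across levels ($\prod_\ell b_\ell$ bounds the number of nodes, not any node's degree); moreover, with geometric values $V_\ell\propto 2^{-\ell}$ and unit value per level you need $\Theta(2^\ell)$ edges at level $\ell$, giving $\ln d=\Theta(w)$ and hence, if your NE bound worked, a gap $\Theta(\ln d)$ — a sign the scheme is not internally consistent, since the paper's regime is $w=\Theta(\sqrt{\ln d})$. (In the paper a single hub of degree $d$ with harmonic values supplies all of $\ln d$ at once; no level structure is needed.) Second, your diagnosis of why the gap caps at $\sqrt{\ln d}$ is off: it is not robustness against malicious blocking, but the self-weight of the temptation structure — the clique itself has welfare $\Theta(w^2)$, which both appears in the NE upper bound and can be extracted by the monopolist (pricing all hubs at $2$), so the ratio degrades once $w^2\gg\ln d$; this is where the hypothesis $w^2=O(\ln d)$ in Theorem~\ref{thm:general-lb} comes from. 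Your monopolist step also needs care: a shared hub has one price, so you cannot ``split'' each edge's value tightly at all scales simultaneously unless every edge has a private leaf endpoint absorbing its full value while hubs price $0$ — which is exactly the role of the paper's gadget leaves.
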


The bound is proven using the following construction (see Fig. \ref{fig:graph-lb}).
There is a clique of size $w+1$ and any node in the clique is connected to an ``harmonic gadget'': $d$ edges with values $1,1/2,1/3,$ $...,1/d$.
The value of an edge connecting two nodes in the clique is $4$.
We first claim that for these parameters, a monopolist would price all the clique nodes at $0$ and get full revenue from all the harmonic gadgets.
In any NE, however, at most one seller prices below $1/w$, thus not much revenue is gained from the harmonic gadgets.

	\begin{figure}[t]
		\center
		\framebox{\includegraphics[width=2.5in, height=1.5in]{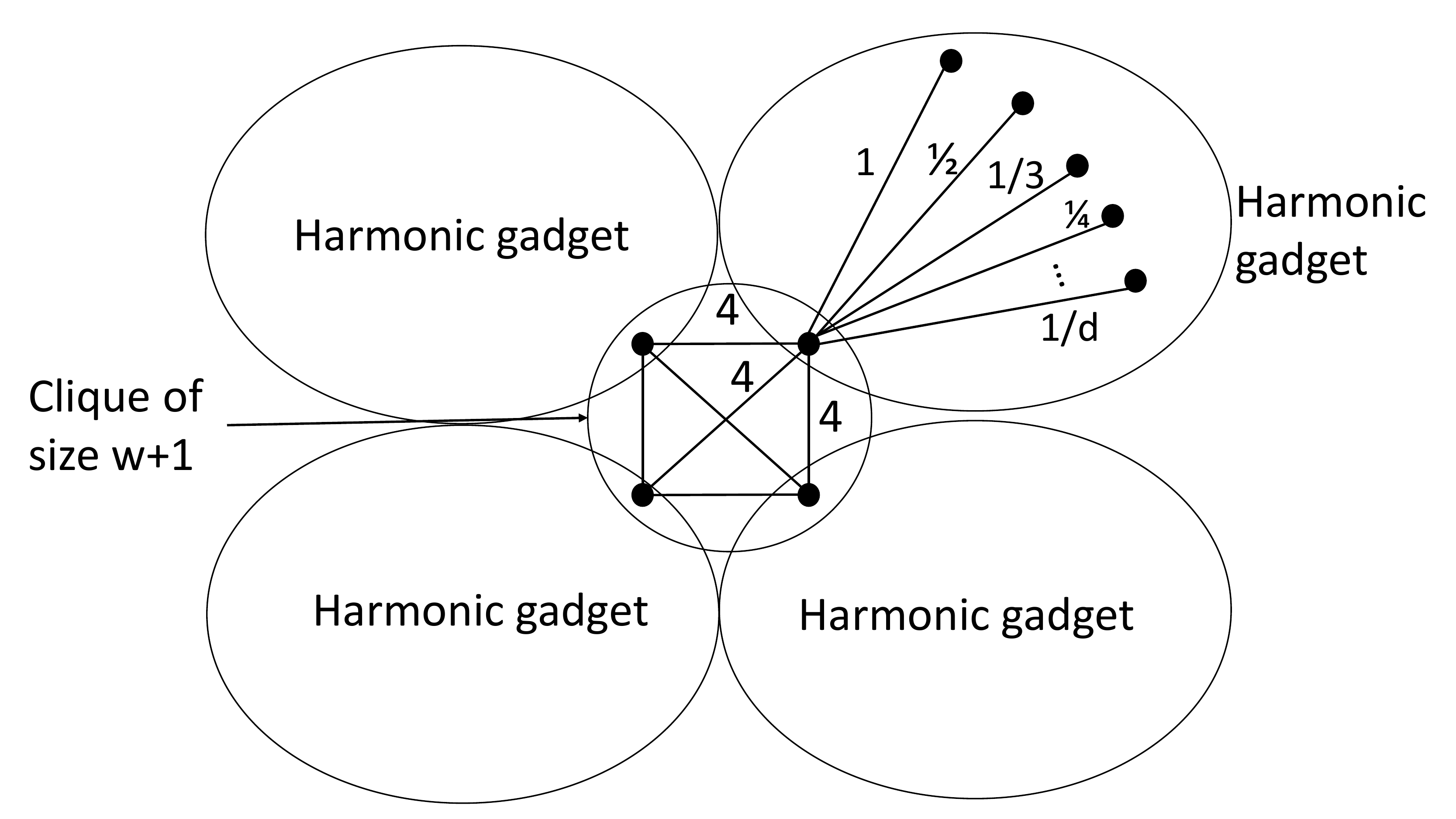}
		}
		\caption{The construction of Theorem \ref{thm:general-lb}. A clique of $w+1$ sellers, each edge in it is of value $4$.
			Each node in the clique is connected to an ``harmonic gadget''.
\label{fig:graph-lb}}
	\end{figure}

\ignore{
\begin{proof}
	Consider a clique of size $w+1$ (arboricity of at least $w/2$)
	such that any node in the clique is connected to the ``harmonic gadget'': $d$ edges with values $1,1/2,1/3,1/4,...,1/d$.
	The value of an edge that connects any two nodes in the clique is $4$.
	See Figure \ref{fig:graph-lb} for illustration.
	
	\begin{figure}[t]
		\center
		\framebox{\includegraphics[width=4in, height=2.25in]{graph-lb.pdf}
		}
		\caption{The construction of Theorem \ref{thm:general-lb}. There is a clique of $w+1$ sellers in the center, with each edge of value $4$.
			Each node in the clique is connected to an ``harmonic gadget'', with edges of values $1,1/2,1/3,1/4,\dots, 1/d$. \label{fig:graph-lb}}
	\end{figure}
	
	Observe that a monopolist can get revenue $(w+1)\left(\sum_{i=1}^{d} 1/i\right)\geq (w+1)\ln d$ by pricing any node in the clique at $0$ and any other node in the $w+1$ harmonic gadgets at the value of the edge that connects the node to the clique. Thus
	the revenue of a monopolist that prices all the nodes is at least $(w+1)\ln d$.
	
	
	We observe that in any Nash equilibrium there is at most one node in the clique whose price is less or equal to $1/w$.
	Indeed, the payoff for such a node is at most $1 + w (1/w) = 2$. ($1$ from the gadget, and the rest from the clique).
	If there are two such nodes, then each node can gain more than $2$ by tightening the edge between them (as the other one prices at most at $1$).
	
	Thus, the revenue (and welfare) in any Nash equilibrium 
	is bounded from above by:	
	$4 (w+1)^2$ (from the clique), plus
	$\ln d +1$ (from the harmonic gadget connected to the only possible node with price at most $1/w$), plus
	$(w+1) (1+ \ln w )$ from the gadgets for which the clique node is pricing at least at $1/w$.
	
	Thus, the total revenue (and welfare) of any Nash equilibrium is at most $4(w+1)^2+\ln d+1 + (w+1)(1 + \ln w)$,
	while the monopolist revenue is at least $(w+1) \ln d$. We note that the monopolist can also get revenue of at least $2w^2$ by pricing all nodes of the clique at 2. Thus the regime of interest would be the one in which $w^2\leq \ln d$.
	The ratio of the monopolist revenue to equilibrium revenue (and welfare) is at least
	$$ \frac{\max\{(w+1) \ln d, 2w^2\}}{4(w+1)^2+\ln d+1 + (w+1)(1 + \ln w)} $$	
	For $w^2\leq \ln d$ the ratio between the monopolist revenue and the NE revenue (and welfare) tends to $w$.	
	Note that in this regime, the degree of each node is at most $2d$ and the arboricity is at least  $w/2$.	
\end{proof}
}
\subsection{Trees}

For trees, the above bound (when $w=1$) is a constant. 
We next present a different construction that shows that for some trees, the revenue of a monopolist can be factor $\Omega(\ln \ln d)$ larger than the revenue of the best equilibrium. In particular, it show that a constant upper bound is impossible.

\begin{theorem}
	\label{thm:general-tree-lb}
	There exists a family of trees with maximum degree $d$ for which the revenue that a monopolist seller gets is factor $\Omega(\ln \ln d)$ larger than the revenue (and welfare) in any NE.
\end{theorem}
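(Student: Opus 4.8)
The plan is to exhibit an explicit family of trees in which a monopolist can simultaneously extract revenue from $\Theta(\log\log d)$ essentially independent ``harmonic blocks,'' while the competitive structure forces \emph{every} Nash equilibrium to collect revenue (and welfare) from only a constant number of these blocks; the ratio is then $\Omega(\log\log d)$. The basic unit is the harmonic gadget already used in Theorem~\ref{thm:general-lb}: a star whose center is joined to $N$ leaves by edges of values $s, s/2, \dots, s/N$. Its total social value is $s\,H_N=\Theta(s\ln N)$, realized by a monopolist who prices the center at $0$, whereas a single competing center seller can never earn more than $s$ from it, since a single uniform price $p$ on a harmonic demand curve yields $p\lfloor s/p\rfloor \le s$. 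This gap between the block's social value and the center's best unilateral revenue is exactly what I would exploit: a center is happy to \emph{abandon} its entire block (raising its price above the block's top value $s$) as soon as a single incident edge of value larger than $s$ becomes available.

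The construction I would use places $L+1$ blocks, at scales $s_0<s_1<\dots<s_L$ and with \emph{doubly-exponentially growing sizes} $N_\ell$, whose centers form the internal nodes of a tree joined by high-value ``spine'' edges. Each spine edge incident to a scale-$s_\ell$ center is assigned a value lying in a window strictly between the own-revenue of its cheaper endpoint and that of its more expensive endpoint; the intended effect is that in any equilibrium the cheaper side must abandon its block to capture the spine edge, while the more expensive side keeps its block. Branching the tree so that each served (low-priced) center \emph{blocks} its neighbours from being served lets this abandonment cascade, so that serving one block precludes serving the adjacent ones. Because forcing the abandonment of a scale-$s_\ell$ block (whose value is $\Theta(s_\ell\ln N_\ell)$) requires the next level's gadget to be exponentially larger, the sizes $N_\ell$ must roughly square at each level, so the maximum degree is $d = N_L = 2^{2^{\Theta(L)}}$ and hence $L=\Theta(\log\log d)$ — this is precisely where the double logarithm enters.

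The monopolist bound is the easy half: pricing every block center at $0$ and every spine endpoint so that the spine edges are tight realizes the full value $\sum_{\ell=0}^{L}\Theta(s_\ell\ln N_\ell)$ of all $L+1$ blocks; after normalizing the blocks to a common social value $B$ this is $\Theta(L\cdot B)$. The equilibrium upper bound is the heart of the matter: I would prove by induction along the tree (from the leaves/blocks inward along the spine) that in every Nash equilibrium at most $O(1)$ blocks are served, i.e.\ have their center priced below the block's top value. The inductive step would use the window property of the spine values — whenever two adjacent centers both keep their blocks, the cheaper one strictly prefers to deviate and seize the spine edge, contradicting equilibrium — so that served centers form a sparse set in the spine tree, and the calibrated scale schedule upgrades ``sparse'' to ``constantly many.'' Summing, any equilibrium has welfare $O(B)$, yielding the ratio $\Omega(L)=\Omega(\log\log d)$.

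The main obstacle is this equilibrium upper bound, and in particular ruling out all ``good'' equilibria rather than merely exhibiting one bad equilibrium. Unlike the clique of Theorem~\ref{thm:general-lb}, whose density forces the set of low-priced sellers to a single vertex, a tree admits large independent sets, so limiting the number of simultaneously served blocks cannot follow from any one-edge argument; it must come from the global doubly-exponential scale schedule together with a careful induction that tracks, for each seller, the residual demand it faces given its descendants' prices. I expect the delicate points to be (i) showing the window/forcing property is robust over the entire continuum of price profiles — including malicious, tie-breaking, and off-path prices — so that no equilibrium can ``spread'' service across levels, and (ii) calibrating the scales $s_\ell$, sizes $N_\ell$, and spine values so that the degree is exactly $d$, the top level fits within degree $d$, and the level count is $\Theta(\log\log d)$ while the per-level loss remains a constant factor. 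Once the induction is established, the monopolist-versus-equilibrium ratio follows from the summation above.
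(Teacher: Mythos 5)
Your construction is, in outline, the same as the paper's (Theorem \ref{thm:general-tree-lb} is proved with a path ``spine'' of $2m+1$ edges whose values grow geometrically, harmonic gadgets hung off the even-index nodes, and the level count capped at $m=\Theta(\ln\ln d)$ precisely because the spine's total welfare $\approx 2^m$ must stay comparable to a single gadget's value $\approx\ln d$ --- the same arithmetic as your normalized-block accounting), and your monopolist lower bound is the easy half in both treatments. The genuine gap is exactly where you flag it: the all-equilibria upper bound. But the mechanism you sketch for it does not work as stated. Your ``window'' step --- whenever two adjacent centers both keep their blocks, the cheaper one strictly prefers to deviate and seize the spine edge --- is a local, one-edge argument, and whether a center can profitably capture a spine edge depends on its neighbour's price: in a Nash equilibrium (including malicious ones, which you must rule out) the neighbour may simply price the spine edge out entirely, making ``keep your block'' a perfectly good best response for both endpoints. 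So no local repulsion between served centers can be extracted from the spine values alone; and your fallback, ``served centers form a sparse set, and the scale schedule upgrades sparse to constantly many,'' is not an argument --- sparsity is compatible with $\Omega(L)$ served blocks (say, every other center low-priced), which would destroy the gap entirely.

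What the paper does instead, and what is missing from your proposal, is a global backward-propagating chain of equilibrium constraints yielding the much stronger conclusion that \emph{at most one} even-index seller prices at most $1$ in any NE. Take the highest-index seller $j$ priced at most $1$; his utility is at most $3$; to prevent him from deviating to the preceding spine edge (value $2v+6$), seller $j-1$ must price at least $2v+3$; for that high price to itself be a best response, seller $j-2$ must price above $v$ (else $j-1$ profitably drops to $v+2$ and earns $2v+4$); and since seller $j-2$ then has zero utility, the seller before him must leave him no slack, i.e.\ price at least the value of the edge after him --- and this condition cascades all the way back along the path, forcing every earlier even seller above $1$. The forcing is thus not that low-priced centers repel each other pairwise, but that a single low-priced center's equilibrium conditions constrain the \emph{entire prefix} of the spine; the inductively doubled edge values ($2v+2$, $2v+6$) are calibrated exactly to make each link of this chain go through. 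Without this (or an equivalent) global argument, your proposal establishes only the construction and the monopolist bound, not the theorem; with it, your final summation $\max\{m\ln(d-2),\,2^m\}$ versus $3+\ln d+2^{m+4}$ at $m=\Theta(\ln\ln d)$ is precisely the paper's conclusion.
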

\ignore{
\begin{proof}
Fix an integer $m$ to be determined later.
Consider a path graph with $2m+1$ edges with the following values. The first edge has value $5$. For any $j=1,2,\ldots m$, given that edge $2j-1$ has value $v$, set the value of the edge $2j$ to be $2v+2$ and the value of edge $2j+1$ to be $2v+6$. Any node of even index is additionally connected to an ``harmonic gadget'' with $d-2$ spikes: $d-2$ edges with values $1,1/2,1/3,1/4,...,1/(d-2)$. See Figure \ref{fig:general-tree-lb-prices} for illustration.

%

\begin{figure}[t]
	\center
	\framebox{\includegraphics[width=4in, height=1.5in]{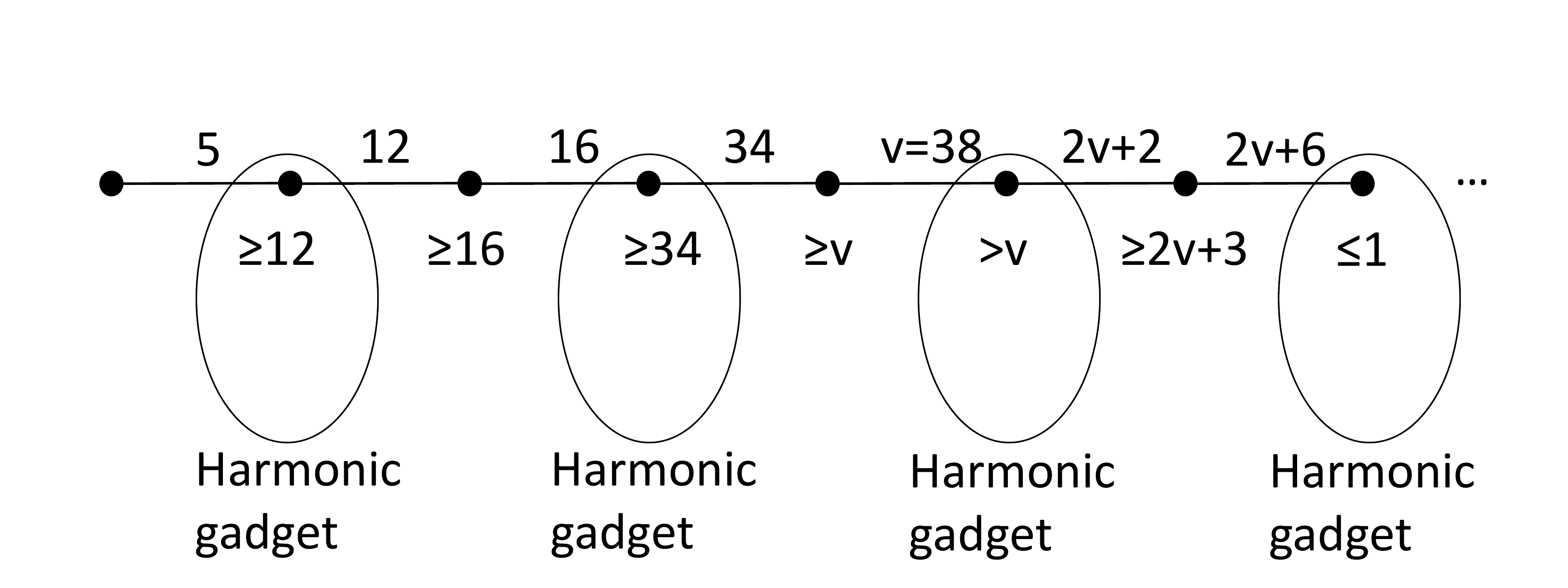}
	}
	\caption{The construction for Theorem \ref{thm:general-tree-lb}. There is a path with $2m+1$ edges and values defined inductively from left to right as illustrated.
The figure illustrates the argument that
there is at most one seller on the path with an even index that is pricing below $1$. To make sure it will be a best response for him to price below one, all prior sellers are constrained to price high, and no other even seller prices below 1.\label{fig:general-tree-lb-prices}}
\end{figure}

We argue that in any Nash equilibrium there is at most a single seller of even index on the path with price that is at most $1$.
Consider the highest index seller on the path that prices at most at $1$, we argue that no seller of lower index on the path prices at most at $1$.
If this is seller 2, we are done. Otherwise, let this seller have index $j$ and define $v$ to be the value such that the edge previous to that seller to be of value $2v+6$.
We would like to argue that any even seller of lower index prices at least at the value of the edge prior to him, and thus prices at a price that is larger than $1$.

We use Figure \ref{fig:general-tree-lb-prices} for an illustration of the argument described next.
For seller $j$ with price of at most $1$, his utility is at most $3$.
For the previous seller (index $j-1$) to prevent $j$ from having a beneficial deviation in which his revenue is larger than $3$, seller $j-1$ must price at least at $2v+3$.
We now claim that seller $j-2$ price is larger than $v$, as if his price is at most $v$ then seller $j-1$ can price at $v+2$ and increase his utility to at least $2v+4>2v+3$.
Next we claim that to force seller $j-2$ to price at least at the value of the edge before him ($v$) then every seller on the path must price at least at the value of the edge after him.
This is so as seller $j-2$ has $0$ utility, and to make sure he does not decrease his price, the prior seller must price at least at the value of the edge after him ($v$).
Same argument hold for any prior seller, he has to price at least at the value of the edge after him, to prevent the next seller from decreasing his price.

Now, a monopolist seller can get revenue of at least $m\ln (d-2)$ by pricing every seller on the path at $0$ and gaining all the revenue from the $m$ harmonic gadgets.
He can also get revenue that is as high as the sum of all edges on the path, this revenue is at least $2^m$ as this is a trivial bound for the highest value edge on the path.
By the claim that there is at most one seller of even index on the path with price that is at most $1$, the total equilibrium revenue (and welfare) is bounded by the total welfare of the path, plus the revenue (and welfare) of the gadget connected to  the single seller on the path that prices at a price of at most $1$, which is at most $3+\ln d$.
The welfare of the path is at most twice the welfare of the odd index edges, as the value of the edges increases.
That sum of values of the odd index edges is $\sum_{i=1}^{m+1} a_i$ for the series defined by the recursion: $a_1=5$ and $a_{i+1}=2a_i+6$.
Now $\sum_{i=1}^{m+1} a_{i}  = a_1 + \sum_{i=1}^{m} a_{i+1} = a_1+ \sum_{i=1}^{m} (2a_i+6) = a_1 + 6m+2 \sum_{i=1}^{m} a_i $, thus
$a_{m+1}= a_1 + 6m+ \sum_{i=1}^{m} a_i $ or alternatively $\sum_{i=1}^{m+1} a_i = 2a_{m+1}-a_1-6m<2a_{m+1} $. Thus to bound the sum, we only need to bound $2a_{m+1}$.
We prove by induction that $a_i<6\cdot 2^i - 6$. For $i=1$ it indeed holds that $a_1=5<12-6=6$. Assume this holds for $i$, we prove it for $i+1$.
Indeed, by the induction hypothesis $a_{i+1}=2a_i+6<2(6\cdot 2^i -6)+6 =6\cdot 2 ^{i+1}-12+6=2^{i+1}-6$. Thus $a_{m+1}< 2^{m+1+1}-6 <2^{m+2}$ and therefore  $\sum_{i=1}^{m+1} a_i < 2^{m+3}$.
This implies that the welfare of the path is bounded by $2^{m+4}$.

We conclude that the ratio of the monopolist revenue to equilibrium revenue (and welfare) is at least $ \frac{\max\{m \ln (d-2), 2^m\}}{3+\ln d + 2^{m+4}} $.	

For $m=\Theta(\ln \ln d)$ this ratio tends to $\Omega(m)$ as we aimed to prove.
\end{proof}
}


For proving this impossibility result we construct the following graph, fixing an integer $m$ (to be determined later).
Consider a path with $2m+1$ edges. 
The first edge has value $5$. For any $j=1,2,\ldots m$, given that edge $2j-1$ has value $v$, set the value of the edge $2j$ to be $2v+2$ and the value of edge $2j+1$ to be $2v+6$. Any node of even index is additionally connected to an ``harmonic gadget'' with $d-2$ spikes: $d-2$ edges with values $1,1/2,1/3,1/4,...,1/(d-2)$. See Figure \ref{fig:general-tree-lb-prices} for illustration.

We first argue that in any Nash equilibrium there is at most a single seller of even index on the path with price that is at most $1$.
We then argue that a monopolist seller can get revenue of at least $m\ln (d-2)$ by pricing every seller on the path at $0$ and gaining all the revenue from the $m$ harmonic gadgets.
We conclude that the ratio of the monopolist revenue to equilibrium revenue (and welfare) is at least $ \frac{\max\{m \ln (d-2), 2^m\}}{3+\ln d + 2^{m+4}}. $
For $m=\Theta(\ln \ln d)$ this ratio tends to $\Omega(m)$ as we aimed to prove. Full details appear in Appendix \ref{app:lower-bounds}.

	For trees, there is still a gap between this lower bound of $\Omega(\ln \ln d)$ and the upper bound of $O(\ln d)$ for non-malicious NE as implied by Corollary \ref{cor:tree-lb}. Closing this gap is left as an open problem.

\subsection{Non malicious Nash equilibrium}

We have observed that the worst Nash equilibrium might have zero revenue, even for a single edge. In this section, we consider non-malicious NE and present two simple examples that show that the upper bound of Theorem \ref{thm:graph-poa} is tight in both parameters for the non-malicious NE with the worst revenue (``Price of Anarchy'').

\begin{proposition}
	the following holds:
    \begin{itemize}
		\item For any $w\geq 2$ there exists a graph (symmetric clique) with $2w$ nodes and arboricity $w$ for which the revenue in some non-malicious equilibrium is smaller than the monopolist revenue by a factor of at least $w$. 
Moreover, for that graph, every best-response dynamics starting at zero prices converges to such an equilibrium.
		\item For any $d\geq 2$ there exists a tree (star) with maximum degree $d$ for which the revenue in some non-malicious NE is smaller than the monopolist revenue by a factor of at least $\ln d$.
	\end{itemize}
\end{proposition}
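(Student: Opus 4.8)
The plan is to prove the two parts with two separate, self-contained constructions, each tailored to make exactly one of the two parameters in Theorem~\ref{thm:graph-poa} the binding one: a unit-weight clique isolates the arboricity $w$, and a star carrying the harmonic gadget isolates the $\ln d$ term.

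For the clique part I would take $K_{2w}$ (so $2w$ nodes, hence arboricity exactly $\lceil 2w/2\rceil=w$ by the arboricity computation already in the text) with every edge of value $1$. First I would pin down the monopolist revenue: pricing every node at $\tfrac12$ makes every edge tight, so all $\binom{2w}{2}=w(2w-1)$ edges sell and the monopolist collects the full welfare $w(2w-1)$ (which it cannot exceed). Then I would exhibit the candidate: one seller at price $0$ and the remaining $2w-1$ sellers at price $1$. I would verify it is a non-malicious NE — the zero-priced seller can sell only at price $0$ (any positive price overshoots every incident tight edge), so its revenue is $0$ and pricing at $0$ is its non-malicious best reply; each unit-priced seller sells exactly its one edge to the zero-priced seller (every other incident edge would require its price to drop to $0$), so its revenue is $1$ and it cannot improve. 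The total equilibrium revenue is thus $2w-1$, giving ratio $w(2w-1)/(2w-1)=w$.

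For the star part I would take a center joined to $d$ leaves by the harmonic gadget, i.e.\ edge values $1,\tfrac12,\dots,\tfrac1d$, so the maximum degree is $d$. The monopolist again extracts the full welfare $H_d=\sum_{i=1}^d 1/i>\ln d$ by pricing the center at $0$ and each leaf at its edge value. For the bad equilibrium I would set the center at price $1$ and every leaf at $0$. The key computation is that the center's revenue at price $t$ is $t\cdot|\{i : 1/i\ge t\}|$, which equals $1$ at every $t=1/j$ and never exceeds $1$; hence pricing at $1$ is a best reply with revenue exactly $1$, while each leaf is a losing seller whose only feasible sale is at price $0$. This makes the profile a non-malicious NE of revenue $1$, so the ratio to the monopolist is $H_d>\ln d$.

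The one genuinely non-routine part is the \emph{moreover} claim for the clique: that \emph{every} best-response dynamics from the all-zero profile converges to the one-zero-seller equilibrium, regardless of the update order. Here I would argue by an invariant plus monotonicity. First, the state stays inside $\{0,1\}^{2w}$: when all of a seller's neighbours are priced in $\{0,1\}$, its best reply is $1$ if it has at least one zero-priced neighbour (capturing exactly those at the maximal price) and $0$ otherwise (ties broken non-maliciously), and the all-zero start already lies in this set. Writing $k$ for the number of zero-priced sellers, I would then read off that a zero-priced seller best-replies iff $k=1$ and a unit-priced seller best-replies iff $k\ge 1$; hence for $k\ge 2$ the only non-best-replying sellers are the zero-priced ones, and updating any of them moves it to $1$, strictly decreasing $k$. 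Thus $k$ descends monotonically from $2w$ and halts precisely at $k=1$, which is exactly the target equilibrium (the process never reaches $k=0$). I expect this monotonicity bookkeeping — in particular verifying that no seller ever needs a price outside $\{0,1\}$ and that the dynamics can neither stall above nor slip below $k=1$ — to be the main thing to get right; everything else reduces to the single-price revenue computations above.
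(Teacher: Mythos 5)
Your proof is correct and takes essentially the same route as the paper: the identical unit-weight clique on $2w$ nodes with the one-seller-at-$0$, rest-at-$1$ non-malicious equilibrium against the all-at-$\tfrac12$ monopolist benchmark, and the identical harmonic star with the center at $1$ and leaves at $0$ against the center-at-$0$ monopolist pricing. Your $\{0,1\}$-invariant and monotone count of zero-priced sellers for the \emph{moreover} claim is simply a fleshed-out version of the paper's one-line observation that in any best-response dynamics from zero prices sellers keep raising prices from $0$ to $1$ until exactly one zero-priced seller remains.
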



The two claims directly follow from the following simple examples.
		Consider first the clique graph of $n=2w$ nodes with all weights being $1$. Note that such a clique has arboricity $w$.
		Here is a non-malicious Nash equilibrium: one of the sellers prices at $0$ and all others price at $1$.  The total revenue (and total welfare in this NE) is exactly $(n-1)$.
		Note that there exists a fully-efficient fully-revenue-extracting non-malicious equilibrium where each seller prices at $1/2$.
		In this NE the revenue (and welfare) is exactly $n(n-1)/2$. Thus, the ratio
 between the best non-malicious equilibrium revenue and the worst non-malicious equilibrium revenue is  $n/2 = w$. 
Finally, observe that in any best response dynamics starting at $0$, sellers keep increasing prices from $0$ to $1$, except when there is only one seller pricing at $0$, and the others at $1$. Thus, any  best response dynamics starting at 0 prices will end at a non-malicious NE    in which one seller prices at $0$ and all others price at $1$.	

		Consider now a star with $d$ spikes, with edge $i$ of value $1/i$.
		There is a non-malicious NE with revenue of $1$: the center prices at $1$ and all other price at 0. Revenue (and welfare) of $\sum_{i=1}^d 1/i\geq \ln d$
		can be achieved by pricing the center at 0 and any other seller at the price of its edge. The second claim follows.

\section{Best Reply Dynamics}

Our main positive result (Theorem \ref{thm:graph-poa}) ensures that for any graph for which a non-malicious equilibrium exists, there is an equilibrium with high revenue. Thus, one naturally wonders if 
all graphs admit a non-malicious equilibrium.
If so, can one find such an equilibrium in polynomial time? Is there a natural dynamics that ends in such an equilibrium?
	
A natural procedure for converging to an equilibrium is repeated best-reply dynamics, and one might hope that such dynamics will indeed always converge to a non-malicious equilibrium in polynomial time.
Such a dynamics starts with arbitrary prices, and at each step a seller that is not best replying is chosen, and he updates his price to a best reply. If the goal is finding a non malicious equilibrium, then one needs to consider a seller with $0$ utility that is not pricing at $0$ as a seller that is not best replying.
We conjecture that for any graph, such a process terminates 
in an equilibrium (which is clearly non malicious).
A stronger conjecture is that such an equilibrium will be reached in polynomial time.\footnote{We remark that we have simulated best response dynamics on general graphs with arbitrary order, and they always terminated in equilibrium very fast.}



In this section, we prove the existence of (pure-strategy) non-malicious equilibria in \emph{trees} by presenting some best-reply dynamic that converges to equilibrium in a finite number of iterations.\footnote{Our proof only ensures that the dynamics will stop, but does not ensure terminating in polynomial time.
}

%
%

\begin{theorem}
	\label{thm:best-res-dyn-tree}
In every tree, for every initial profile of prices, there exists a sequence of player best replies that terminates in a non-malicious Nash equilibrium.
\end{theorem}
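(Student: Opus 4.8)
The plan is to reduce the statement to a pure termination question and then exhibit a structured best-reply order on the rooted tree. First I note that if best replies always break ties toward the lowest price (a seller whose optimal utility is $0$ sets price $0$), then \emph{every} fixed point of the dynamics is automatically a non-malicious equilibrium: a losing seller has maximal utility $0$, the price $0$ always attains utility $0$ and is therefore among its optimal prices, and the tie-break forces it to $0$. Hence it suffices to produce \emph{some} finite best-reply sequence reaching a fixed point. I would root the tree at an arbitrary vertex $r$ and recall that Algorithm~\ref{alg:line-ne} already handles the path case by a one-directional sweep; the natural generalization is a root-to-leaves sweep that processes each vertex after its parent and before its children (with unprocessed children initialized to price $\infty$), so that when a vertex is first processed its only sellable edge is the one to its parent, and it prices exactly at its slack on that edge.

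The engine of the argument is a slack-preservation lemma generalizing the path observation that ``seller $i$ leaves no slack to the previous seller''. Writing a seller's revenue as $g(x)=x\cdot n(x)$, where $n(x)$ is the number of incident edges of slack at least $x$, I observe that appending one further incident edge of slack exactly $p$ increases $g(x)$ by $x$ for every $x\le p$ and leaves $g(x)$ unchanged for $x>p$; hence if $p>0$ was a maximizer of $g$, it remains one afterwards. In the sweep this applies verbatim: when a child tightens its edge to an already-priced parent of price $p$, the parent is left with slack exactly $p$ on that edge, so the child's move never disturbs the parent. Thus the downward sweep keeps every previously processed vertex best-replying, \emph{provided} each processed child actually sells to its parent.

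The one obstruction is a \emph{losing} child: if a vertex $i$ was priced at $p_i>0$ to capture its parent edge but has a child $c$ with edge value $v_{c,i}<p_i$, then $c$ cannot profit, prices at $0$, and hands $i$ slack $v_{c,i}<p_i$ on $(i,c)$, which may make it strictly profitable for $i$ to \emph{lower} its price to grab that edge. Lowering $p_i$ is not a local event: it increases the slack $i$'s own parent sees (possibly prompting a raise) and the slack $i$'s other children see (possibly reactivating children that had dropped to $0$). I would therefore follow the sweep with a cleanup phase of further best replies, always advancing the shallowest non-best-replying vertex, and maintain the invariant that the stabilized region is a subtree containing $r$ whose tight edges have been fixed by the slack-preservation lemma.

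The main obstacle is precisely termination of this cleanup, since the cascades triggered by a price drop propagate in both directions and a vertex may move several times. I would attack it with a potential argument rather than monotonicity, because the best-reply map is \emph{not} monotone (decreasing a neighbor's slack can raise a seller's optimal price, e.g. slacks $\{10,9\}\to\{10,1\}$ move the optimum from $9$ to $10$), so the supermodular/Tarski machinery does not apply. Two candidate handles are (i) a frontier/induction-on-depth argument showing that once the top $k$ levels are stabilized, level $k+1$ can be completed without permanently reopening the upper levels, and (ii) a reduction that absorbs each \emph{penultimate} vertex together with its leaf-children into a single vertex carrying the induced residual-demand curve (a single-good-buyer gadget the model already accommodates), inducting on the number of vertices: solving the smaller augmented tree fixes the penultimate vertex's price, and re-expanding while letting the leaves best-reply restores an equilibrium of the original tree without changing any upstream price. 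I expect route (ii) to be the cleaner path to finiteness, with the slack-preservation lemma supplying the verification that the reattached leaves are genuinely best-replying and that no upstream vertex is disturbed; note that only finiteness, not a polynomial bound, is required.
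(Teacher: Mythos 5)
Your preliminary reductions are sound: the tie-breaking convention does make every fixed point of the dynamics a non-malicious equilibrium, and your slack-preservation lemma (appending an incident edge of slack exactly $p$ adds $x$ to the revenue function at every $x\le p$ and nothing above, so a maximizer $p>0$ stays a maximizer) is correct and is exactly the mechanism behind the path sweep of Algorithm~\ref{alg:line-ne}. But the proof is incomplete precisely where you flag it: termination of the cleanup phase is the \emph{entire} content of the theorem, and neither of your two candidate routes is carried out. Route (i) essentially restates the problem, since you yourself observe that a price drop cascades upward, so "stabilize the top $k$ levels" is not an invariant you know how to maintain. Route (ii) fails as stated for a structural reason: when a leaf child $c$ of $i$ best-replies, the slack it hands back to $i$ on $(i,c)$ is $\min(p_i, v_{c,i})$ --- it depends on $i$'s \emph{own current price}, so the "induced residual-demand curve" you want to absorb into $i$ is endogenous to the trajectory, not a fixed single-good buyer of value $v_{c,i}$. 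In particular, a best reply of the collapsed vertex (which may raise its price up to $v_{c,i}$ and keep the leaf demand) need not be a best reply in the original game, where raising above the old $p_i$ loses the leaf edge until the leaf re-replies; the identification of the two games is valid only at fixed points, and even there only in one direction. Since the theorem asks for a sequence of best replies \emph{of the original game}, the reduction does not go through, and interleaving parent moves with leaf re-tightenings just reproduces the alternating dynamic whose termination is the open issue.

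The paper resolves this by the opposite reduction. Rather than absorbing responsive leaves into their parent, it peels off a single leaf $u$ (attached at $v$) and strengthens the induction hypothesis to trees in which every vertex may carry arbitrarily many leaves with \emph{fixed} prices: a fixed-price leaf is a genuine constant in its neighbor's demand, so the recursive call --- stabilize the rest of the tree while $u$ is frozen at $x_i$ --- is an instance of the same theorem on fewer free vertices. What remains, and what your sketch lacks any substitute for, is finiteness of the outer alternation $x_1,y_1,x_2,y_2,\dots$ between $u$'s reply and the recursively restabilized subtree. The paper gets this from structural properties of $u$'s best-reply map: there is a threshold $\theta$ above which the reply is a constant $x^*$; below $\theta$ the reply is monotone non-increasing in $y$; and (crucially, using the fixed-value leaves) the reply takes only finitely many possible values, since some fixed-value edge must be saturated. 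Finiteness of the value set forces the sequence $(x_i)$ to have a local minimum, and a short case analysis on the status of the $(u,v)$ edge at that minimum shows the dynamics must already have halted. Your observation that the best-reply map is non-monotone correctly rules out lattice/Tarski arguments, but you supply no replacement potential, finite-range, or local-minimum argument, so the termination claim --- and with it the theorem --- remains unproven in your proposal.
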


We defer the complete proof to Appendix \ref{app:best-response-trees}.
We now present a sketch of the proof,
and recursively define the sequence of best responses.
We  pick a leaf $u$ of the tree that is connected to the rest of the tree via vertex $v$. 
	Let $x_0$ be the initial price of $u$ and $y_0$
	be the initial price of $v$.  Our best reply dynamics will proceed by repeatedly, for $i=1....$, let $x_i$ be $u$'s reply to $y_{i-1}$, and then recursively
	use a best reply sequence that updates the rest of the tree, assuming that $u$'s price is set to $x_i$.  Note that this recursive best-reply sequence
	starts with $v$ updating his price (since all the other vertices are already best-replying from the previous recursive call, but then other vertices may
	update their price and $v$ may update again, and so on).  The
$v$'s price at the end of the recursive call is called $y_i$.  The
	recursive call on the sub-tree terminates due to an inductive use of the theorem (i.e., the theorem is proved by induction on the number of vertices in
	the tree).  To ensure that the induction hypothesis applies to the recursive call which is applied not just to a subtree, but rather to a subtree to which an
	extra leaf $u$ with a {\em fixed} value $x_i$ is attached, we prove the theorem (inductively) also for trees in which each vertex may have an arbitrary number of leaves with a fixed value attached to them. 


\section{Additional Results on Malicious vs. Non-Malicious Equilibria}

We give some comparative statics results regarding non-malicious equilibria; These result are deferred to Appendix \ref{sec:windfall} due to lack of space. We show that allowing sellers to be ``malicious'' might actually improve the situation both for any individual seller, and for the group of sellers (and society) as a whole.

\label{sec:windfall}


\subsection{Individual seller's utility}

We first consider the utility of any individual seller.
We say that some level of utility is {\em feasible} for a seller, if given that all other sellers price at $0$, the seller can get that utility.
We next show that for very simple scenarios, some feasible utilities cannot ever be reached in a non malicious equilibrium.
Moreover, in this example, a seller with positive feasible utilities has zero utility in every non-malicious equilibrium.

\begin{observation}
	\label{obs:zero-any-non-mal}
	For some path graph with three edges, a seller with positive feasible utilities has $0$ utility in any non-malicious equilibrium.
\end{observation}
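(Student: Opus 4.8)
The plan is to exhibit one explicit three-edge path together with a distinguished seller, check that a positive utility is feasible for that seller, and then argue that every non-malicious equilibrium drives its revenue to zero. I would work on the path with vertices $1,2,3,4$ and choose the values so that the middle edge $(2,3)$ dominates the two outer edges, taking the distinguished seller to be an endpoint, say seller $1$, whose only edge $(1,2)$ has small value $v_{1,2}>0$. The feasibility half is immediate and is the easy part of the statement: when sellers $2,3,4$ all price at $0$, seller $1$ may post $p_1=v_{1,2}$ and sell its edge, so utility $v_{1,2}>0$ is feasible. Everything then reduces to the equilibrium analysis.

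For that analysis I would use elementary best-reply reasoning: a best-responding seller with positive revenue must leave some incident edge tight, since otherwise a small price increase strictly improves him. The intended mechanism is that the large middle edge forces sellers $2$ and $3$ to price high. I would organize the argument as a short case analysis according to which of the three edges are sold, using the non-maliciousness hypothesis to discard the ``malicious'' branches (e.g.\ seller $4$, once its edge is priced out, must sit at $0$ rather than at a positive price). The target conclusion of the cases is that in any non-malicious equilibrium seller $2$'s price is at least $v_{1,2}$; once $p_2\ge v_{1,2}$, the edge $(1,2)$ is either unsold or tight only with $p_1=0$, so seller $1$ earns $0$.

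The step I expect to be the real obstacle is ruling out the \emph{edge-sharing} equilibria. Because a single edge of value $v$ between two sellers admits a whole continuum of split equilibria $p_i+p_j=v$, there is a dangerous family in which seller $2$ prices \emph{below} $v_{1,2}$, takes only a thin slice of the middle edge while seller $3$ extracts the rest, and splits the outer edge $(1,2)$ with seller $1$, leaving seller $1$ a positive amount. These are exactly the profiles that would violate the claim, so the entire force of the statement rests on choosing the three values so that no such split is a mutual best reply. Concretely, I would need to show that, for \emph{every} price profile consistent with equilibrium (not merely one convenient profile), seller $2$'s best reply lies at or above $v_{1,2}$ --- i.e.\ that the value gap between the middle and the outer edges makes extracting the middle strictly preferable to sharing the outer edge. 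Establishing this inequality uniformly over the equilibrium-consistent profiles, rather than pointwise, is where the delicate work lies, and it is what pins down the precise numerical values (and possibly the exact choice of distinguished seller) that the construction must use; if the outer edge can be profitably shared under some admissible neighbor prices, the chosen seller will not be squeezed to $0$ and the construction must be revised.
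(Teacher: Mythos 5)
Your template coincides with the paper's: a four-vertex path, trivial feasibility for an endpoint seller, and an equilibrium analysis forcing the neighbor of the small edge to price above it. But the step you yourself flag as ``the real obstacle'' is precisely the step you leave open, and since the statement is existential, a proof must actually exhibit values and verify the uniform bound --- so as written this is a plan with a genuine gap at the decisive point. The gap is not cosmetic: natural instantiations of your picture (``middle edge dominates the two outer edges,'' distinguished endpoint on a small edge) fail in exactly the way you feared. For example, on the path $A,B,C,D$ with values $1,9,1$, the profile $p_A=0$, $p_B=8.5$, $p_C=p_D=1/2$ is a non-malicious equilibrium: the middle and right edges are tight and sold, each seller is best responding ($C$'s maximum utility at any price $p\le 1/2$ is $2p\le 1$, attained at $p=1/2$; $B$ cannot beat $8.5$; the only losing seller $A$ prices at $0$), and the endpoint $D$ earns $1/2>0$. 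So the two outer edges cannot both be small; the far outer edge must be calibrated to be large.

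The paper closes the gap with values $6,9,1$ on $A,B,C,D$ (distinguished seller $D$, feasible utilities all of $(0,1]$), via a two-step squeeze that avoids any analysis of how the middle edge is split. Step (i): in every non-malicious equilibrium, $p_B\le 6$. Indeed, if $p_B>6$ then edge $(A,B)$ is unsold, so non-maliciousness at the far endpoint forces $p_A=0$; moreover $p_C\le 3$ in all cases (if the middle edge is sold then $p_C\le 9-p_B<3$; otherwise $C$'s only possible revenue is from the value-$1$ edge, so either $p_C\le 1$ or $C$ earns $0$ and non-maliciousness gives $p_C=0$). Then $B$'s current utility is at most $9$, while deviating to price $6$ sells both incident edges ($0+6\le 6$ and $6+p_C\le 9$) for utility $12$ --- contradiction. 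Step (ii): since $p_B\le 6$, seller $C$ can always secure utility $3$ by pricing at $3$ and selling the middle edge, so his equilibrium utility is at least $3$; but any $p_C\le 1$ caps his utility at $2p_C\le 2$. Hence $p_C>1$, edge $(C,D)$ is unsold, and $D$'s utility is $0$. Note how this dissolves your worry about arguing ``uniformly over equilibrium-consistent profiles'': you never need to rule out particular splits of the middle edge, because non-maliciousness at $A$ yields an unconditional cap on $p_B$, and the rest is a one-line outside-option comparison for $C$. The asymmetry of the values is essential and is what your sketch was missing: the far edge ($6$) must be large enough that $B$ prefers $6+6$ to at most $9$, while $9-6=3$ must exceed twice the tiny edge.
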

\begin{proof}
	Consider the path with 4 sellers denoted by order as $A,B,C,D$ and edges of value $6,9,1$, in the corresponding order.
	Observe that every utility in $[0,1]$ is feasible for $D$. We next show that in any non-malicious NE his utility is $0$.
	
	We argue that the edge of value $6$ must be sold. If $B$'s price is at most $6$, it will be sold.
	If $B$'s price is larger than $6$, then by the non-malicious condition $A$'s price is $0$ and $C$'s price is at most $3$.
	In this case $B$'s utility is at most $9$, but price of $6$ will give him a higher utility of $12$.
	We conclude that $B$'s price is at most $6$. This means that $C$'s utility is at least $3$, thus his price is larger than $1$ and the edge of value $1$ is not sold, and $D$ has $0$ utility.
\end{proof}

What if we allow the equilibrium not to satisfy the non malicious condition? We next show that in this case the situation is very different:
{\em any} utility that is feasible for the seller can arise as the seller's equilibrium utility in \textit{some} equilibrium.
Moreover, for {\em any} price that the seller might set, there is always an equilibrium with the seller setting this price.
Thus for any individual seller, allowing the equilibrium not to satisfy the non-malicious condition potentially gives him a significant improvement in utility.

\begin{algorithm}
	\begin{enumerate}
		\item Initialize all prices to be infinity, initialize the price of seller $i$ to $p_i$.
		\item In Breath First Search (BFS) order starting from $i$, let each seller best response to the current prices. If the seller can only get $0$ utility, set his price to the value of the edge to the seller that is towards $i$ (this is well defined for trees, this is the seller that inserted him to the BFS queue).
	\end{enumerate}
	\caption{Algorithm to find an equilibrium in a tree with a seller $i$ pricing at $p_i$}
	\label{alg:tree-ne-utility}
\end{algorithm}

\begin{proposition}\label{prop:line-any-ne-util}
	Fix any tree, any seller $i$ in the tree, and any fixed $p_i$. Algorithm \ref{alg:tree-ne-utility} finds a Nash equilibrium in which seller $i$ prices at $p_i$ and has utility $p_i$ times the number of adjacent edges with value at most $p_i$.
\end{proposition}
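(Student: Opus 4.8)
The plan is to establish two things at once: that the single breadth-first sweep of Algorithm~\ref{alg:tree-ne-utility} leaves every seller best-responding (so the output is a Nash equilibrium), and that this pins down exactly which edges incident to $i$ are sold, hence $i$'s utility. Root the tree at $i$, keep $p_i$ fixed, and recall that the sweep processes each other seller once in BFS order. The structural fact I would exploit is that when a seller $s\neq i$ is processed, all of $s$'s children (its neighbours further from $i$) are still priced at infinity, so the only sellable edge available to $s$ is the one to its \emph{parent}; thus $s$ simply best-responds to a single edge, pricing at its slack $v_{\mathrm{parent},s}-p_{\mathrm{parent}}$ when that is positive (tightening the parent edge), and otherwise taking the tie-break price equal to the parent-edge value. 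Throughout, $i$'s utility is $p_i$ times the number of its incident edges that end up sold, so it remains to (a) identify those edges and (b) check the whole profile is an equilibrium.

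For part (a), I would analyse the neighbours of $i$ directly. A neighbour $j$, when processed, sees only the edge $(i,j)$ against the fixed price $p_i$. If $v_{i,j}>p_i$, then $j$'s best reply is $p_j=v_{i,j}-p_i>0$, which makes $(i,j)$ tight and therefore sold, contributing $p_i$ to $i$. If instead $v_{i,j}\le p_i$, seller $j$ can earn nothing from $(i,j)$, so the tie-break sets $p_j=v_{i,j}$, giving $p_i+p_j=p_i+v_{i,j}>v_{i,j}$, so $(i,j)$ is left unsold. The edges of value at most $p_i$ are therefore exactly the ones that stay unsold, and the sold incident edges are precisely those whose value \emph{exceeds} $p_i$; $i$'s utility equals $p_i$ times their number. (Since the sold set is the ``value greater than $p_i$'' set, I read the proposition's count over those edges, which appears to correct a slip of ``at most'' for ``greater than''.)

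For part (b), the equilibrium check is the real content, and it reuses the one-directional argument behind Proposition~\ref{prop:line-half}. The lemma I would isolate is: after a seller best-responds to its parent, it leaves the parent's slack on their shared edge equal to the parent's own price (in the tight case) or to $0$ (in the tie-break case). Propagating this down the tree, in the final profile every child edge of a seller $s$ is tight at threshold exactly $p_s$, while every ``low'' child edge has slack $0$. Hence, viewing $s$'s revenue as a function of a hypothetical own-price $q$, the contribution of the children is $q$ times a count that is constant on $0<q\le p_s$ and collapses for $q>p_s$, so it is maximised precisely at $q=p_s$; the parent edge, when profitable, is likewise tight at $p_s$ and only reinforces this, and when unprofitable the tie-break price blocks any beneficial undercut. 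This shows every $s\neq i$ is best-responding. The same computation at the root shows $i$'s fixed price $p_i$ is itself a best reply: its incident tight edges are sold exactly for own-prices up to $p_i$, so $p_i$ maximises $q\cdot|\{j:v_{i,j}>p_i\}|$ (uniquely when that set is nonempty, and trivially otherwise).

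The main obstacle I anticipate is precisely the consistency issue created by the sweep order: each seller commits to a price while its children are still at infinity, yet must remain optimal once those children take finite prices. The argument hinges on the fact that a child's best reply tightens the shared edge exactly at the parent's \emph{current} price, so the additional revenue a seller later collects from its children is a term of the form $q\cdot(\text{const})$ whose maximiser coincides with the price the seller already chose; adding it cannot shift the optimum. Turning this ``no-slack-left-to-the-parent'' observation into a clean global argmax statement over the whole tree, and separately verifying it at the fixed root $i$, is where the care is needed.
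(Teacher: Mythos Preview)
Your approach is essentially the paper's, just more thoroughly unpacked: the paper's proof points back to the argument for Algorithm~\ref{alg:line-ne} and isolates the same invariant you identified---that the tie-break rule (pricing at the parent-edge value when a seller earns zero) leaves the parent with zero slack, so previously processed sellers remain best-responding once their children commit. Your explicit revenue computation at each node is precisely what is hidden behind that reference, and you are right that the inequality in the statement is inverted (the sold incident edges are those with value strictly greater than $p_i$); the paper's own proof text carries the same slip.
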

\begin{proof}
	The argument that this algorithm terminates in an equilibrium is exactly the same as the argument for Algorithm \ref{alg:line-ne}. 	The way we break ties for sellers with $0$ utility ensures that not only the currently considered seller is best responding to all sellers for which the price is finite, but it also makes sure all such sellers are still best responding (as it leaves them no slack).
	
	Clearly the price of $i$ in this equilibrium is $p_i$. To complete the proof we observe that any neighbor of $i$ for which the edge with $i$ has value at most $p_i$, makes this edge tight (and it is sold), while any other neighbor prices high enough such that the edge is not sold and such that $i$ has no slack on that edge. This implies that the utility of $i$ is as claimed.
\end{proof}

\subsection{Total sellers utility}
We have seen that allowing arbitrary equilibrium might have windfall for a single seller. But what about the group of sellers as a whole? and the society?
We next show that both might gain a lot from allowing the equilibrium not to satisfy the non-malicious condition (proof is in Appendix \ref{app:malice}).


\begin{proposition}
	\label{prop:mal-better-ne}
	For any $d\geq 2$ there is a tree with maximum degree $d$ for which the revenue (and welfare)
	in equilibrium (and thus also the revenue of a monopolist seller) is factor $\Omega(\ln d)$ larger than the revenue (and welfare) in {\em any} non-malicious Nash equilibrium.
\end{proposition}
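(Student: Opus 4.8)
The plan is to exhibit a single tree on which a malicious equilibrium extracts a full harmonic gadget while non-maliciousness provably kills it. I would take the three-seller path $A - B - G$ with edge values $v_{AB}=6$ and $v_{BG}=9$, and attach to $G$ a harmonic gadget: $d-1$ leaves $s_1,\dots,s_{d-1}$ with $v_{G,s_i}=1/i$. This is a tree of maximum degree $d$, attained at $G$. First I would record the target ``windfall'' equilibrium, which is malicious only at $A$: set $p_A$ arbitrarily large, $p_B=9$, $p_G=0$, and $p_{s_i}=1/i$. Here $A$ loses (its only edge is dead since $p_B>6$), so any price is a best reply; $B$ optimally charges $9$ because the $A$--$B$ edge is unreachable while $B$--$G$ yields $9$ at $p_G=0$; $G$ has zero slack on every incident edge and is therefore content at $0$; and each $s_i$ tightens its edge. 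The welfare and revenue of this equilibrium are both $9+\sum_{i=1}^{d-1} 1/i = \Omega(\ln d)$, which in particular lower-bounds the monopolist revenue.

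The core of the argument is to show that every non-malicious equilibrium has welfare (hence revenue, since revenue never exceeds welfare) bounded by an absolute constant, namely $v_{AB}+v_{BG}=15$. I would split on whether the $A$--$B$ edge is sold. If it is sold then $p_B\le 6$, so $G$'s slack on the $B$--$G$ edge is $9-p_B\ge 3$; comparing $G$'s best-reply options forces $G$ to price high: charging $9-p_B\ge 3$ captures the single valuable edge for revenue $\ge 3$, whereas any price $p_G\le 1$ can collect only the gadget leaves with $1/i\ge p_G$ plus the $B$--$G$ edge, for revenue at most $p_G(1+\lfloor 1/p_G\rfloor)\le 2$. Hence $p_G=9-p_B\ge 3>1$ in any such equilibrium, every gadget edge (value $\le 1$) is unsold, and welfare is at most $v_{AB}+v_{BG}=15$. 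If the $A$--$B$ edge is not sold, then $A$ is a losing seller, so non-maliciousness gives $p_A=0$ and therefore $p_B>6$; now either $p_G\ge 1$, again killing the gadget and leaving welfare $\le 9$, or $p_G<1$, in which case $B$ could deviate to $p_B=6$ and tighten both of its edges ($0+6\le 6$ and $6+p_G<9$) for revenue $12$, exceeding its current revenue of at most $9-p_G<9$ and contradicting equilibrium. Either way welfare $\le 15$.

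Combining the two bounds yields a ratio of at least $\left(9+\sum_{i=1}^{d-1}1/i\right)/15=\Omega(\ln d)$ between the malicious equilibrium and every non-malicious one; a non-malicious equilibrium exists at all by Theorem~\ref{thm:best-res-dyn-tree}, so the comparison is non-vacuous. I expect the main obstacle to be the case analysis that pins down $p_G$: the delicate point is that non-maliciousness of the dead endpoint $A$ is exactly what prevents $B$ from sitting high, which in turn denies $G$ the slack it needs on the $B$--$G$ edge and forces $G$ to abandon the gadget. The harmonic decay of the gadget values is what makes the competing ``price low and sweep the leaves'' option worth only $O(1)$, so it cannot rescue the gadget's welfare; verifying that these two options for $G$ never tie in the relevant range is the one quantitative check that must be done carefully.
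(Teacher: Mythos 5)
Your proof is correct, but it takes a genuinely different route from the paper's. The paper joins two stars by an edge of value $2$: center $A$ carries the harmonic gadget, while center $B$ carries $d-1$ spikes each of value $3/(d-1)$. There the non-malicious condition acts on $B$'s many spike sellers: if $p_B>3/(d-1)$, those dead sellers must price at $0$, so $B$ could sweep them for revenue $3$ versus his current at most $2$; hence $p_B\le 3/(d-1)$, which hands $A$ a guaranteed $2-3/(d-1)>3/2$ on the connecting edge (for $d>7$), forces $p_A\ge 1/2$, and caps every non-malicious equilibrium's welfare at $7$ --- against a malicious equilibrium of revenue $\ge \ln d$ in which $B$'s dead spike sellers price at $10$. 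You instead hang a single harmonic gadget off the endpoint $G$ of a three-vertex path and place the malice at the one dead leaf $A$; your case split on whether the $A$--$B$ edge sells makes non-maliciousness of $A$ (hence $p_A=0$) the lever that denies $B$ a high resting price, and in both cases $G$'s outside option of at least $3$ on the $B$--$G$ edge beats the at most $2$ obtainable by sweeping the gadget at any price $p_G\le 1$, so $p_G>1$ and welfare is at most $15$. Both routes deliver constant non-malicious welfare versus $\Omega(\ln d)$ malicious revenue; yours is arguably leaner (one gadget, malice at a single vertex, clean constants with no $d>7$ proviso), while the paper's double star illustrates the non-malicious constraint binding on many losing sellers simultaneously and pins $B$'s price down quantitatively to $O(1/d)$.

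One cosmetic nit: in your unsold case, with the split at $p_G\ge 1$, the boundary $p_G=1$ with $p_{s_1}=0$ still sells the value-$1$ gadget edge, so that sub-case's welfare bound should read $10$ rather than $9$; alternatively split at $p_G>1$ versus $p_G\le 1$ and note that $B$'s deviation to $6$ still sells both edges since $6+p_G\le 7\le 9$. Either fix leaves your overall bound of $15$, and hence the $\Omega(\ln d)$ ratio, intact.
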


We remark that the above theorem also show that for non malicious NE, the logarithmic ``Price of Stability'' bound presented in Corollary \ref{cor:tree-lb} is essentially tight.


\bibliographystyle{plain}

\appendix

\section{Path Graphs: Proofs 
}
\label{app:lines}

We begin by a useful and simple observation about the price set by a best responding seller, it is used repeatedly throughout the paper.\footnote{To see this, note that if no edge is tight, a slight increase in price will increase the seller revenue (some edges must be sold as the seller's revenue is positive).
}
\begin{observation}
	\label{obs:tight-edge}
	For any graph and any vector of prices for the sellers, if seller $i$ is best responding and his revenue is positive, then there is an incident edge that is tight.
\end{observation}

\noindent{{\textbf{ Proof for Proposition \ref{prop:line-half}:}}\emph{
	For any path graph, Algorithm \ref{alg:line-ne} terminates in a NE after a linear number of steps. The total revenue in this equilibrium is at least half of the maximum welfare.
}

	\begin{proof}
		As pointed up above, the algorithm terminates in an equilibrium.
		We argue that one of the two runs attains  revenue of at least half the maximum welfare.
		
		Note that by this algorithm, by Observation \ref{obs:tight-edge} every sold edge is tight, thus the total revenue from that edge equals the value of the edge.
		Also observe that if an edge has value that is not smaller than the value of the previous edge, it must be sold. Additionally, the first edge is always sold.
		Consider the total value of all edges that are no-smaller than the previous edge, plus the first edge, in each of the two directions.
		Every edge is counted in one of these directions, except for strict local minima (i.e., edges that have strictly lower weight than its two neighbouring edges) that might not be counted. Edges that are weakly local maxima (i.e., have weights at least as high as their neighbouring edges) are counted in both directions.
		Now, pick one of the directions and match the first strict local minimum to the first edge, and from there, match each strict local minimum to the last local maximum prior to it.
		Clearly, the value of each strict local minimum is at most the value of the edge matched to it.
		It follows that the sum of sold edges in the two executions together is at least the sum of the weights in the path.
		Thus, one of the directions obtains at least half the total weight.
	\end{proof}
	
\section{Cycles and Trees: Proofs 
}
	\label{app:cycle}

In the first part of this section we prove Theorem \ref{thm:cycle}.

We first take care of two special cases.
First, if all edges have the same value, there is a fully-efficient fully-revenue-extracting non-malicious equilibrium in which every seller prices at half that value.
Second, if the cycle is a triangle ($n=3$) with edges $x\geq y\geq z$, there is a NE with price 0 at the node that is not on the edge $z$, and prices $x$ and $y$ at the other sellers on edges with values $x$ and $y$, respectively. This is a NE with at least $2/3$ of the welfare.

We next move to consider cycles of at least $4$ sellers when not all edges have the same value. We analyze the algorithm for cycle and prove Claim \ref{claim:cycle-app} which implies Theorem \ref{thm:cycle}.
We start with some supporting claims.

\begin{claim}
	After a complete cycle of the path-graph alg. on the cycle, the only seller who might not be best responding is the 2nd seller.
\end{claim}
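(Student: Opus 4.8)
The plan is to view the first sweep as an execution of the path algorithm and then to isolate the single event that can break best-response: seller $1$'s concluding update. Cutting the cycle at seller $1$, the clockwise sweep over sellers $2,3,\dots,n$ is an execution of Algorithm~\ref{alg:line-ne} on the path $1-2-\cdots-n$ with seller $1$'s endpoint price fixed at $0$, the extra edge $(n,1)$ being closed only when seller $1$ updates last. By the analysis behind Proposition~\ref{prop:line-half} --- the tie-breaking rule guarantees that each processed seller leaves no slack to his predecessor --- after this sweep sellers $2,\dots,n$ are mutually best responding on the path, and in particular seller $2$ has set $p_2=v_{1,2}$, so seller $1$ has zero slack on edge $(1,2)$.

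First I would dispose of the interior sellers. Every seller $i$ with $3\le i\le n-1$ is adjacent only to $i-1$ and $i+1$, neither of which is seller $1$, and its incident edge set is the same in the path and in the cycle. Since the concluding update changes only $p_1$, all neighbors of such an $i$ keep the prices they held at the end of the sweep, so each interior seller remains best responding. Hence the only sellers whose status the last update can change are the two neighbors of seller $1$, namely sellers $2$ and $n$ (and seller $1$ himself, who is best responding by construction once he replies).

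Next I would check seller $n$. When seller $1$ replies he sees $(1,2)$ with zero slack, which yields nothing at a positive price, so his reply is driven solely by $(n,1)$, whose slack for him is $v_{n,1}-p_n$. By the best-reply/tie-break dichotomy he leaves no slack to seller $n$ on $(n,1)$: either $v_{n,1}>p_n$ and he makes $(n,1)$ tight with $p_1=v_{n,1}-p_n$, or he has zero utility and the tie-break sets $p_1=v_{n,1}$, giving seller $n$ slack $v_{n,1}-p_1=0$ on $(n,1)$. In the first case $(n,1)$ becomes tight at seller $n$'s unchanged price $p_n$, so a short check shows $p_n$ stays optimal; in the second case $(n,1)$ is worthless to seller $n$ and he remains optimal on his backward edge exactly as on the path. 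Either way seller $n$ is best responding after the update.

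This leaves seller $2$ as the only possible violator, and it can fail in general: once $p_1>0$ we have $p_1+p_2=p_1+v_{1,2}>v_{1,2}$, so the edge $(1,2)$ that seller $2$ had been selling is lost and seller $2$ may now prefer to reprice. Thus after the complete sweep only seller $2$ can fail to be best responding. The crux of the argument --- and the step I expect to require the most care --- is the seller-$n$ analysis: one must verify that seller $1$'s closing move never opens up slack on $(n,1)$ that seller $n$ could exploit, which is precisely what the tight/tie-break dichotomy for seller $1$'s reply provides, together with the bookkeeping that the last update perturbs only seller $1$'s two incident edges.
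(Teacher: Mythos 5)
Your decomposition --- interior sellers untouched by the closing move, seller $1$ best responding by construction, and a tight/tie-break dichotomy for the effect of seller $1$'s last update on seller $n$ --- is essentially the paper's own one-sentence argument (an appeal to the path-algorithm invariant that a seller whose successor has already moved stays a best response) made explicit. Your interior-seller bookkeeping and your case I (positive slack, seller $1$ tightens $(n,1)$, option set shrinks pointwise while revenue at $p_n$ is preserved) are correct. The gap is in the second half of your seller-$n$ analysis, exactly where you said the most care was needed. Seller $1$ has zero utility precisely when $p_n \ge v_{n,1}$, and the sub-case $p_n = v_{n,1}$ is not degenerate: unlike every seller on the path, seller $n$ best replies against a \emph{live} forward edge at $p_1 = 0$, so with backward slack $b = v_{n-1,n} - p_{n-1}$ his best reply equals $v_{n,1}$ on an open set of instances (whenever $v_{n,1} \le b \le 2v_{n,1}$, selling both edges at the smaller slack beats selling one). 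In that sub-case seller $n$ is selling $(n,1)$ tight at $p_1=0$; the tie-break $p_1 = v_{n,1}$ kills that sale, his revenue drops from $2v_{n,1}$ to $v_{n,1}$, and repricing to $b$ is strictly better whenever $b > v_{n,1}$. So ``he remains optimal on his backward edge exactly as on the path'' is false: $p_n$ was never a path-style reply to the backward edge alone.

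Concretely, take the $4$-cycle with $v_{1,2} = v_{2,3} = 1$, $v_{3,4} = 11$, $v_{4,1} = 6$. The sweep gives $p_1 = 0$, $p_2 = 1$, $p_3 = 1$ (tie-break), and seller $4$ best replies with $p_4 = 6$, selling both incident edges for revenue $12$ (versus $10$ from selling $(3,4)$ alone at price $10$). Seller $1$ then has zero slack on both incident edges and tie-breaks to $p_1 = v_{4,1} = 6$. At the resulting profile $(6,1,1,6)$, seller $2$ \emph{is} best responding (zero utility, zero slack everywhere), but seller $4$ is not: he earns $6$ from $(3,4)$, while deviating to $p_4 = 10$ makes $(3,4)$ tight and earns $10$. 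So after a complete cycle the unique non-best-responder can be seller $n$ rather than seller $2$: the claim fails as literally stated, so no blind proof could have closed your gap, and the paper's own proof glosses the very same case (its invariant ``each seller whose successor's price was set by the algorithm is best responding'' is exactly what breaks for seller $n$ when his reply hits $v_{n,1}$). What your argument does establish is that sellers $3,\dots,n-1$, seller $1$, and --- outside the sub-case $p_n = v_{n,1}$ --- seller $n$ are best responding. A clean repair is to let seller $1$ tie-break to $p_1 = 0$ (i.e., keep his price) when he has zero utility: then seller $n$'s environment is unchanged, seller $2$ is again the only possible violator, and the downstream dynamics analysis can proceed; alternatively one must weaken the claim to ``only sellers $2$ and $n$ may fail to best respond'' (and indeed both can fail simultaneously, e.g., with $v_{1,2}=10$ in the example above).
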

\begin{proof}
	The claim is true by the properties of the path algorithm: each seller which sets his price by the path algorithm and for which the next seller's price was set by the path algorithm, is best responding. The second seller is the only one for which this is not the case (as he did not get to update his price after the first seller changed his price from $0$).
\end{proof}

\begin{claim}
	At the best response dynamics, when a seller that is best responding is changing his price, he makes the edge with his neighboring seller that changed his price before him tight (and sold), and that seller is now best responding (so the only seller that might not be best responding is his other neighbor, next in order).
\end{claim}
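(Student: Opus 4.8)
The plan is to prove the claim by induction on the steps of the best-response dynamics, maintaining an invariant that pins down the local situation at the ``wavefront.'' Concretely, I would show that whenever a seller $i$ is the unique active (non-best-responding) seller, the following holds: every edge strictly behind the wavefront is tight and sold, the backward neighbor $i-1$ has already made his own backward edge $(i-2,i-1)$ tight, and---crucially---the forward edge $(i,i+1)$ of the active seller is currently tight (it was made tight by $i+1$ when he last set his price, and neither $i$ nor $i+1$ has moved since). Granting this invariant, a single best-response step of $i$ is analyzed purely through his two slacks in the cycle: let $s_b = v_{i-1,i}-p_{i-1}$ be his backward slack and $s_f = v_{i,i+1}-p_{i+1}$ his forward slack.

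The heart of the argument is a short case analysis showing that the active seller's best reply makes the backward edge tight. Since $i$ has degree two, his best-response revenue is $\max\{\max(s_b,s_f),\,2\min(s_b,s_f)\}$, attained either by pricing at $\min(s_b,s_f)$ (selling both edges) or at $\max(s_b,s_f)$ (selling the more slack edge alone). Using the invariant that the forward edge is currently tight, $i$'s current price equals $s_f$; hence he is active precisely when deviating is strictly profitable, and I would show this happens exactly in the two regimes (i) $s_f\le s_b$ with $s_b>2s_f$, and (ii) $s_b<s_f$ with $2s_b>s_f$. In both regimes the maximizing price is $s_b$: in (i) the best reply is to sell the backward edge alone at $s_b$, and in (ii) it is to sell both edges at $s_b$. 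Either way $p_i+p_{i-1}=s_b+p_{i-1}=v_{i-1,i}$, so the backward edge $(i-1,i)$ becomes tight and sold, consistent with Observation \ref{obs:tight-edge}. If neither regime holds, $i$ is already best-responding and the configuration is a Nash equilibrium, which is the alternative permitted by the ``either a NE, or the backward edge is tight'' dichotomy stated informally in the body.

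It then remains to verify that this step restores the invariant one seller forward. First, the backward neighbor $i-1$ stays best-responding: after the step his forward slack is $v_{i-1,i}-p_i = v_{i-1,i}-s_b = p_{i-1}$, while his backward slack is $p_{i-1}$ as well because edge $(i-2,i-1)$ is tight; with both slacks equal to $p_{i-1}$, pricing at $p_{i-1}$ sells both of his edges tightly and no deviation does better, so $i-1$ is best-responding and the edge he shares with $i$ stays tight. Second, the only edge whose slack the step can disturb is $(i,i+1)$, so $i+1$ is the only seller that may now fail to best-respond, and his forward edge $(i+1,i+2)$ is still tight, re-establishing the invariant with active seller $i+1$. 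Combining this with a count of how far the wavefront can advance yields that the process halts in a linear number of steps.

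I expect the main obstacle to be establishing and propagating the ``forward edge of the active seller is tight'' part of the invariant, since this is exactly what collapses the slack analysis into the clean dichotomy above. It is transparent during the first traversal---each seller set his forward edge tight as his successor was processed in the path-graph pass---but care is needed when the wavefront wraps past seller $1$ into a second traversal, where one must track that no seller re-priced between the moment his forward edge was made tight and the moment he becomes active. I would handle this by folding the ``forward-tight'' condition into the induction hypothesis and checking that it is preserved by precisely the step analyzed above.
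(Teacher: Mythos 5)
Your overall mechanism is the same as the paper's---each update by the active seller tightens the backward edge, the backward neighbor then has both slacks equal to his own price and so keeps best-responding, and only the forward neighbor can be disturbed---but the invariant you build everything on is false, and the failure is not the wrap-around issue you flag but the zero-utility tie-breaking rule in step 2 of Algorithm \ref{alg:cycle-ne}. When a seller $i+1$ cannot obtain positive utility during the path pass, the algorithm sets $p_{i+1}$ equal to the value of the edge $(i,i+1)$; this leaves seller $i$ zero slack but does \emph{not} make the edge tight (tightness would force $p_i=0$), and the edge is unsold. Hence already at the start of the BR phase, ``the forward edge of the active seller is currently tight'' and ``every edge strictly behind the wavefront is tight and sold'' can both fail (take any stretch of strictly decreasing edge values), so your plan of ``folding forward-tightness into the induction hypothesis'' has no base case to stand on. The identity $p_i=s_f$, and with it your exact characterization of when $i$ is active via the two regimes, collapses in these cases; likewise your verification that $i-1$ stays best-responding uses that $(i-2,i-1)$ is tight, which fails for the same reason (there the backward slack of $i-1$ is nonpositive rather than $p_{i-1}$, and he best-responds by selling the forward edge alone).

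The repair is to weaken the invariant to what the paper's proof actually uses: the forward neighbor's price was chosen---by the path pass or by an earlier BR step---as a best response to the active seller's \emph{current} price under the algorithm's tie-breaking, i.e., either the forward edge is tight given the current prices, or the forward neighbor prices exactly at the edge value (zero slack, edge unsold). Then one argues via Observation \ref{obs:tight-edge}: the updating seller's new price must make some incident edge tight; if it were the forward edge, the new price would equal the old one (tight case, exactly the paper's step $q'=Y-t=Y-(Y-q)=q$) or equal $0$ with zero revenue (tie-break case), contradicting that the update was a strict improvement; hence the tight edge is the backward one, which is the claim. Your slack computation showing $i-1$ is best-responding is then correct when $(i-2,i-1)$ is tight, and needs the separate easy case of nonpositive backward slack otherwise. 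With these corrections your argument goes through, but as written the proposal establishes the dichotomy only on instances where the path pass never invokes the zero-utility tie-break.
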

\begin{proof}
	Consider a seller that changed his price from $p$ to $p'\neq p$ to best respond. That seller must be selling at price $p'$, and one of his edges must be tight by Observation \ref{obs:tight-edge}. Consider the seller on the other edge. By induction he is the only seller that might not be best responding. Assume that his price was $q$ and that was a best response to the set of price before $p$ was changed to $p'$.
	Let us denote the value of the edge between the seller with price $p$ and the seller with price $q$ by $X$, the value of the other edge adjacent to $q$ by $Y$, and the price of the other seller on this edge by $t$.
	$q$ was  a best response when the neighboring prices were $p$ and $t$. By the path algorithm it holds that $p+q\geq X$ and $q+t\geq Y$.
	
	Assume that $q$ is changed to $q'$ to best respond to the change from $p$ to $p'$. If $q=q'$ we are in a Nash equilibrium.
	Otherwise the seller with price $q'$ must be selling at least one edge, and one edge must be tight. If $q'=X-p'$ we are done (as the edge with value $X$ is tight and $p'$ is tight on both sides so is a best response for the previous seller, thus the only seller that might not be best responding is the next seller).
	Otherwise, $q'=Y-t\geq 0$ and thus $Y\geq t$. By the path algorithm and the hypothesis about the best response dynamics so far, $t$ was chosen to make $t=Y-q$, thus $q=q'$ and we are in a Nash equilibrium.	
\end{proof}

	
	

	
	
\begin{claim}
	For any cycle, the best response process terminates after each seller  best-responds once.
\end{claim}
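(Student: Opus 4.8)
The plan is to fuse the two preceding claims into a clean "token-passing" picture and then resolve the single delicate event: seller $1$ closing the loop. By the first claim, once the Step-2 path-sweep finishes, the only seller who may fail to best-respond is seller $2$. By the second claim, whenever a non-best-responding seller updates he either lands in a Nash equilibrium or tightens the edge to the neighbor who moved just before him, handing the "non-best-responding token" to his other neighbor, the next seller clockwise. Thus the dynamics polls sellers in the cyclic order $2,3,\ldots,n,1,2,\ldots$ and halts exactly when a polled seller is already best-responding. To prove that each seller updates at most once, it therefore suffices to show the process cannot pass the token from seller $1$ back to seller $2$; equivalently, that once seller $1$ has updated (the $n$-th update, if we reach it), seller $2$ is again best-responding.

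First I would reduce everything to seller $1$'s final update. Seller $2$'s best-response status depends only on the prices of sellers $1$ and $3$. Seller $3$ updates once near the start of the pass and—being earlier than seller $1$ in the clockwise order—is not polled again before seller $1$'s turn; by the second claim his update leaves edge $(2,3)$ tight and seller $2$ best-responding, and this edge stays frozen until seller $1$ moves. Hence the only event that can dislodge seller $2$ is seller $1$'s update, and it can do so only through their shared edge $(1,2)$. So the whole claim comes down to controlling what seller $1$ does to edge $(1,2)$.

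For that I would track the two identities created during the pass. Let $P_1$ be seller $1$'s price at the end of Step 2. When seller $2$ opens the pass he tightens edge $(1,2)$, so afterward seller $1$'s slack on $(1,2)$ equals exactly $P_1$ (since the tightened edge gives $P_1+p_2'=v_{1,2}$), while seller $1$'s slack on $(n,1)$ is $s:=v_{n,1}-p_n'$, where $p_n'$ is seller $n$'s post-update price. Seller $1$'s best reply is then a function of $P_1$ and $s$, and a short case analysis shows that as long as $s$ lies in the safe window (roughly $2P_1-v_{1,2}\le s\le 2P_1$), seller $1$ either keeps his price $P_1$ or lowers it just enough that edge $(1,2)$ stays sold with weakly more slack for seller $2$; since seller $2$'s other edge $(2,3)$ is frozen tight, seller $2$'s optimal price is unchanged and he remains best-responding, so the dynamics halts at seller $1$. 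The clean endpoint is the tie-break case, where seller $1$ had zero utility after Step 2 so $P_1=v_{n,1}\ge s$, which immediately rules out the dangerous raise, i.e.\ $s\le 2P_1$.

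The main obstacle is ruling out the complementary regime: that seller $n$ undercuts so sharply during his own update that $s>2P_1$ (seller $1$ would then raise and kill edge $(1,2)$), or symmetrically that $s$ drops below $2P_1-v_{1,2}$ (seller $1$ lowers so far that seller $2$ prefers to abandon edge $(2,3)$). In this regime $P_1=v_{n,1}-p_n^{\mathrm{Step2}}$, so the safe window is exactly a two-sided bound on how far $p_n'$ may move from its Step-2 value; establishing it requires propagating the chain of tightness relations $p_i'=v_{i-1,i}-p_{i-1}'$ backward around the cycle—an induction on the update order that caps each mover's drop in terms of the previous mover's. This backward propagation, showing the accumulated undercut stays inside the safe window by the time the token reaches seller $1$, is where the real work lies; once it is in place, seller $2$ is best-responding after seller $1$'s update and the process stops, so every seller has best-responded at most once.
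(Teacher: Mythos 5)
Your setup is sound and matches the paper's framework: you correctly invoke the two preceding claims to reduce the problem to a single ``token-passing'' pass in the order $2,3,\ldots,n,1$, and correctly identify that termination amounts to showing the token cannot travel from seller $1$ back to seller $2$. But the core of your argument is missing, and you say so yourself: the ``backward propagation \ldots where the real work lies'' is precisely the claim to be proved, and the mechanism you propose for it cannot work. Your plan is a contraction-style induction that ``caps each mover's drop in terms of the previous mover's'' so that the accumulated undercut lands inside a safe window. However, along the pass every updated edge becomes \emph{tight}, so a perturbation of the first mover's price by $\delta$ propagates around the cycle completely undamped, with alternating sign ($+\delta,-\delta,+\delta,\ldots$, since $p_i' = v_{i-1,i}-p_{i-1}'$). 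Nothing decays, so no bound of the form ``each drop is capped by the previous one'' can shrink the deviation into your window; the window analysis at seller $1$ alone cannot decide termination. Relatedly, your intermediate assertion that if seller $1$ lowers his price then ``seller $2$'s optimal price is unchanged'' is false in general: if the new slack $\sigma$ on edge $(1,2)$ exceeds seller $2$'s current price $q$, seller $2$ strictly prefers raising to $q+\sigma$ and abandoning the tight edge $(2,3)$.

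The idea your proposal lacks is a \emph{parity} argument, which is how the paper closes the loop. Because changes propagate with alternating signs along tight edges, on an even cycle the net change after a full loop is exactly zero, so seller $1$ is already best responding and the process halts. On an odd cycle, suppose the process continues past a full loop with all edges tight except $(1,2)$, and let $p,q,t$ be the prices of sellers $1,2,3$; if seller $1$'s update raises $p$ by $\delta>0$, then for the token to keep moving, seller $2$ must drop to $q-\delta$, which is profitable only if $2(q-\delta)>q$; when the token returns after the odd-length loop, the sign mismatch forces seller $2$ to prefer moving back up to $q$ and selling only one edge, which requires $q>2(q-\delta)$ --- a contradiction. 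These two revenue inequalities, not a bound on accumulated undercuts, are what rule out a second pass; your single-pass window analysis is blind to the even/odd distinction on which the whole argument turns.
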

\begin{proof}
	For cycles with an even number of nodes: After a full cycle all edges are sold and are tight, this means that a change made by the first seller propagates with alternating signs, and the net change along a complete cycle is 0, which means that the seller is best responding without changing his price.

For cycles with an odd number of nodes:
	Assume that after a full cycle of best responses we are still not in a NE, and every edge is tight but the edge between seller one and two. Denote the current prices of seller 1,2 and 3 are $p,q$ and $t$, respectively.
	Assume that when $p$ is updated it increases by $\delta>0$ (if at this update it is decreased, we can consider the second third and fourth sellers instead). For the cycle to continue, $q$ must decrease by $\delta$ (as all edges are becoming tight when best responding). This means that $2(q-\delta)>q$ (as after the decrease he will gain $q-\delta$ from both edges, while without it the seller will only get $q$ from the next edge). For the cycle to continue next $t$ will increase to $t+\delta$. Now, after a full cycle, since the cycle is odd and all edges are tight, seller $1$ must update his price back to $p$, and at this point the current price of the next seller is $q-\delta$ while the following is $t+\delta$. For the algorithm to continue the second seller must prefer going back up to $q$, but this mean he will only sell the edge with $p$ and not both edges, so it has to be the case that $q>2(q-\delta)$, a contradiction.     	
\end{proof}

\begin{claim}
	\label{claim:cycle-app}
	For any cycle 
	the algorithm running either clockwise or counter-clockwise gets revenue that is at least one quarter
	of the maximum welfare.
\end{claim}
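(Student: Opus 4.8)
The plan is to reduce the claim to a statement about the total value of the tightly-sold edges, and then to lose one factor of two at the path-algorithm stage and a second factor of two to a single dropped edge. The starting point is that in the final price vector every sold edge is tight: this follows from Observation \ref{obs:tight-edge} together with the two preceding claims (each seller makes the edge to his previously-updated neighbor tight). Since the total revenue equals $\sum_{(i,j)\in S(p)}(p_i+p_j)$ and a tight edge $(i,j)$ contributes exactly its value $v_{i,j}$ to this sum, the total revenue equals the total value of the tightly-sold edges. Writing $W$ for the maximum welfare (the sum of all edge values), it therefore suffices to show that in one of the two directions the tightly-sold edges have total value at least $W/4$. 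The triangle and all-equal-weight cases are already handled, so I may assume $n\ge 4$ and that not all weights coincide.

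First I would analyze the initial pass, i.e.\ the run of the path-graph algorithm around the cycle, reusing the counting argument from the proof of Proposition \ref{prop:line-half}: the path algorithm sells an edge tightly whenever its value is at least that of the preceding edge (on every ``ascent''), and only strict local minima may go unsold. On a cycle, every edge is an ascent in at least one of the two orientations, and charging each strict local minimum to the nearest weak local maximum (which is an ascent in both orientations) shows that the two orientations together tightly sell edges of total value at least $W$. Hence the initial pass in the better of the two directions tightly sells edges of total value at least $W/2$; call this $R_{\mathrm{path}}$. The point needing care here is the wrap-around edge $(n,1)$, which is priced only when seller $1$ updates last: I would verify that this final update still sells $(n,1)$ tightly whenever it is an ascent, so the cycle genuinely inherits the path-algorithm guarantee.

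Next I would track the sold set through the best-response dynamics of Step 3. By the two preceding claims, after the initial pass the only seller who may fail to best-respond is seller $2$, and thereafter the only possibly-deviating seller is always the ``next'' one in order, each update making the edge to the previously-updated neighbor tight. Consequently edges are only \emph{added} to the tightly-sold set during the dynamics, with the single exception of the edge lying just ahead of the last seller to change his price, which may be dropped when the process halts. Writing $v^{*}$ for the value of that dropped edge and $R_{\mathrm{final}}$ for the final revenue, this gives $R_{\mathrm{final}}\ge R_{\mathrm{path}}-v^{*}$.

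The final, and I expect hardest, step is to bound the dropped edge by the surviving revenue, $v^{*}\le R_{\mathrm{final}}$. The intuition is that the seller bordering the dropped edge abandoned it precisely because his other (already tight) incident edge, together with the edges retained around the cycle, was at least as profitable; I would convert this local best-response inequality into the global bound $v^{*}\le R_{\mathrm{final}}$. Combining with $R_{\mathrm{final}}\ge R_{\mathrm{path}}-v^{*}$ yields $R_{\mathrm{path}}\le R_{\mathrm{final}}+v^{*}\le 2R_{\mathrm{final}}$, so $R_{\mathrm{final}}\ge R_{\mathrm{path}}/2\ge W/4$, as claimed. The main obstacles are this last charging argument (making precise why the dropped edge cannot exceed the revenue that survives) and the careful identification of exactly which single edge is dropped and why no other edge is lost; both require a tight reading of how the best-response dynamics interacts with the tie-breaking rule built into the path algorithm.
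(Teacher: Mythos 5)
Your skeleton is the paper's skeleton — better of two path runs gives $R_{\mathrm{path}}\ge W/2$, the best-response phase only adds tight edges except for one exceptional edge, and the final combination $R_{\mathrm{path}}\le R_{\mathrm{final}}+v^{*}\le 2R_{\mathrm{final}}$ — but the step you explicitly defer, $v^{*}\le R_{\mathrm{final}}$, is the actual content of the paper's proof, and your one-sentence heuristic for it does not cover all the cases. The paper closes it by a case analysis on the exceptional edge $Y$ (with $X$ the adjacent edge known to be sold and tight, and $Z$ the edge on $Y$'s other side, and $p,q$ the prices at $Y$'s endpoints): (i) if $Y<X$, the dropped value is dominated by the retained tight edge $X$; (ii) if $Y\ge X$ and $Y$ is unsold, the bound comes not from a seller-deviation argument at all but from the purchase condition: unsoldness means $p+q>Y$, while the two endpoints still collect $p$ and $q$ on the sold edges $X$ and $Z$, so $R_{\mathrm{final}}\ge p+q>Y$ — and this requires separately establishing that $Z$ is sold; (iii) if $Y\ge X$ and $Y$ is sold but \emph{not tight}, each endpoint sells two edges, so $R_{\mathrm{final}}\ge 2(p+q)$, and the deviation inequalities at \emph{both} endpoints, $2p\ge Y-q$ and $2q\ge Y-p$ (raising one's price to tighten $Y$ still sells $Y$), combine to give $Y\le\frac{3}{2}(p+q)<2(p+q)$. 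Your intuition ("the seller abandoned it because his other edges were at least as profitable") corresponds only to case (iii); in case (ii) the relevant inequality is the buyer's participation constraint, not a best-reply comparison, so "converting the local best-response inequality into the global bound" is not a uniform argument.

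A related error in your setup: you assert that in the final price vector every sold edge is tight, so that revenue equals the total value of tightly-sold edges. This is false in general — the exceptional edge can terminate sold but not tight, which is exactly the paper's case (iii). The slip is harmless for the lower bound (a sold non-tight edge only adds revenue, and $R_{\mathrm{final}}\ge R_{\mathrm{path}}-v^{*}$ survives since the loss on $Y$ is then $Y-(p+q)\le v^{*}$), but it shows your "single dropped edge" picture is too coarse: the edge may be degraded rather than dropped, and the two degradation modes (unsold versus sold-with-slack) need genuinely different arguments, which is precisely what your proposal is missing. The remaining ingredients — the wrap-around accounting for the path guarantee via Proposition \ref{prop:line-half}, tightness of sold edges during the dynamics via Observation \ref{obs:tight-edge}, and the monotone-addition claim — match the paper's treatment.
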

\begin{proof}
	In one of the executions of the path algorithm (either clockwise or counter-clockwise), the revenue is at least a half of the welfare (by Proposition \ref{prop:line-half}).
	If this is an equilibrium, by Proposition \ref{prop:line-half} the equilibrium revenue is at least half the welfare.
	Otherwise, the best response dynamics starts, and it only adds sold edges and all of them are made tight, and in particular the last edge not tight is sold and tight.
	In the last step, one edge that was originally sold might not be sold anymore. Denote the value of this edge by $Y$, and the value of the previous and next edges by $X$ and $Z$ respectively. We know that $X$ is sold. If $Y<X$, at most half the revenue is lost and we are done.
	We are left to consider the case that $Y\geq X$ and $Y$ is either not sold, or sold but not tight.
	
	If $Y\geq X$ and $Y$ is not sold, $Z$ must be sold.
	If the prices on the two ends of $Y$ are $p$ and $q$, $Y$ not sold means that $p+q>Y$, while we get revenue of at least $p+q$ from edges $X$ and $Z$ (which are sold). Thus at most half the revenue is lost.
	
	Finally, consider the case that $Y\geq X$ and $Y$ is being sold but not tight. In this case the revenue is at least $2(p+q)$. It also hold that $2p>Y-q$ and $2q>Y-p$ since each seller prefers the current prices over making $Y$ tight, and thus
	$2(p+q)>2Y-p-q$, therefore $(3/2) (p+q)>Y$. We conclude that $Y$ is smaller than the revenue, as the revenue is at least $2(p+q)$.
\end{proof}
	
	
	
\label{app:pos-proofs}

\noindent{\textbf{Proof for claim \ref{claim:log-approx-demand-curve}:}}
		\emph{For every node $v$ it holds that\\ $\sum_{(v,u)\in E^v} v_{(v,u)} \leq r_v \cdot 2(\ln d +1) $.}
	\begin{proof}
		The bound is derived  by observing that the $j$'th highest value edge in $E^v$ cannot have slack larger than $r_v/j$
		(and thus value of more than $r_v\cdot (2/j)$); otherwise for a small enough $\epsilon>0$, a price of $r_v/j+\epsilon$, will get $j$ edges to buy (since the price on the other side is at most half the edge value) and will result in revenue larger than $r_v$, thus a profitable deviation.
		Summing over all these edges from highest value to lowest, the total value is bounded from above by the harmonic sum times $r_v$, that is,
			$\sum_{(v,u)\in E^v} v_{(v,u)} \leq 2r_v \sum_{j=1}^{d} (1/j)\leq
			r_v \cdot 2(\ln d +1) $. 
	\end{proof}

\section{Impossibility Results:
Proofs}
\label{app:lower-bounds}

\noindent{Proof for Theorem \ref{thm:general-lb}:}\emph{
	There exists  a family of graphs with maximum degree $d$ and arboricity $w$ for which $w^2=O(\ln d)$ and the revenue that a monopolist seller can get is factor $\Omega(w)$ larger than the revenue (and welfare) in any Nash equilibrium. It terms of $d$, the factor can be as large as $\Omega(\sqrt{\ln d})$ when $w^2=\Theta(\ln d)$. 
}
\begin{proof}
	Consider a clique of size $w+1$ (arboricity of at least $w/2$)
	such that any node in the clique is connected to the ``harmonic gadget'': $d$ edges with values $1,1/2,1/3,1/4,...,1/d$.
	The value of an edge that connects any two nodes in the clique is $4$.
	See Figure \ref{fig:graph-lb} for illustration.

	Observe that a monopolist can get revenue $(w+1)\left(\sum_{i=1}^{d} 1/i\right)\geq (w+1)\ln d$ by pricing any node in the clique at $0$ and any other node in the $w+1$ harmonic gadgets at the value of the edge that connects the node to the clique. Thus
	the revenue of a monopolist that prices all the nodes is at least $(w+1)\ln d$.
	
	
	We observe that in any Nash equilibrium there is at most one node in the clique whose price is less or equal to $1/w$.
	Indeed, the payoff for such a node is at most $1 + w (1/w) = 2$. ($1$ from the gadget, and the rest from the clique).
	If there are two such nodes, then each node can gain more than $2$ by tightening the edge between them (as the other one prices at most at $1$).
	
	Thus, the revenue (and welfare) in any NE 
	is at most:	
	$4 (w+1)^2$ (from the clique), plus
	$\ln d +1$ (from the harmonic gadget connected to the only possible node with price at most $1/w$), plus
	$(w+1) (1+ \ln w )$ from the gadgets of the other nodes of the clique.
	
	Thus, the total revenue (and welfare) of any NE is at most $4(w+1)^2+\ln d+1 + (w+1)(1 + \ln w)$,
	while the monopolist revenue is at least $(w+1) \ln d$. The monopolist can also get revenue of at least $2w^2$ by pricing all nodes of the clique at 2. Thus the regime of interest would be the one in which $w^2\leq \ln d$.
	The ratio of the monopolist revenue to equilibrium revenue (and welfare) is at least
	$$ \frac{\max\{(w+1) \ln d, 2w^2\}}{4(w+1)^2+\ln d+1 + (w+1)(1 + \ln w)} $$	
	For $w^2\leq \ln d$ the ratio between the monopolist revenue and the revenue (and welfare) in a NE tends to $w$.	
Also, the degree of each node is at most $2d$ and the arboricity is at least  $w/2$.	
\end{proof}

\noindent{Proof for Theorem \ref{thm:general-tree-lb}:}\emph{
There exists a family of trees with maximum degree $d$ for which the revenue that a monopolist seller gets is factor $\Omega(\ln \ln d)$ larger than the revenue (and welfare) in any NE.
}

\vspace{-2mm} 

\begin{proof}
Fix an integer $m$ to be determined later.
Consider a path graph with $2m+1$ edges with the following values. The first edge has value $5$. For any $j=1,2,\ldots m$, given that edge $2j-1$ has value $v$, set the value of the edge $2j$ to be $2v+2$ and the value of edge $2j+1$ to be $2v+6$. Any node of even index is additionally connected to an ``harmonic gadget'' with $d-2$ spikes: $d-2$ edges with values $1,1/2,1/3,1/4,...,1/(d-2)$. See Figure \ref{fig:general-tree-lb-prices} for illustration.
%
%
	\begin{figure}[t]
		\center
		\framebox{\includegraphics[width=3in, height=1.2in]{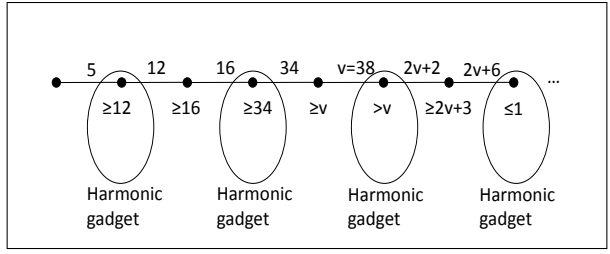}
		}
		\caption{
The construction for Theorem \ref{thm:general-tree-lb}. There is a path with $2m+1$ edges and values defined inductively from left to right.
The figure illustrates the argument that
there is at most one seller on the path with an even index that is pricing below $1$.
 For him to best responding below 1, all prior sellers are constrained to price high, and no other even seller prices below 1.
\label{fig:general-tree-lb-prices}}
	\end{figure}
	
We argue that in any Nash equilibrium there is at most a single seller of even index on the path with price that is at most $1$.
Consider the highest index seller on the path that prices at most at $1$, we argue that no seller of lower index on the path prices at most at $1$.
If this is seller 2, we are done. Otherwise, let this seller have index $j$ and define $v$ to be the value such that the edge previous to that seller to be of value $2v+6$.
We would like to argue that any even seller of lower index prices at least at the value of the edge prior to him, and thus prices at a price that is larger than $1$.

We use Figure \ref{fig:general-tree-lb-prices} for an illustration of the argument described next.
For seller $j$ with price of at most $1$, his utility is at most $3$.
For the previous seller (index $j-1$) to prevent $j$ from having a beneficial deviation in which his revenue is larger than $3$, seller $j-1$ must price at least at $2v+3$.
We now claim that seller $j-2$ price is larger than $v$, as if his price is at most $v$ then seller $j-1$ can price at $v+2$ and increase his utility to at least $2v+4>2v+3$.
Next we claim that to force seller $j-2$ to price at least at the value of the edge before him ($v$) then every seller on the path must price at least at the value of the edge after him.
This is so as seller $j-2$ has $0$ utility, and to make sure he does not decrease his price, the prior seller must price at least at the value of the edge after him ($v$).
Same argument hold for any prior seller, he has to price at least at the value of the edge after him, to prevent the next seller from decreasing his price.

Now, a monopolist seller can get revenue of at least $m\ln (d-2)$ by pricing every seller on the path at $0$ and gaining all the revenue from the $m$ harmonic gadgets.
He can also get revenue that is as high as the sum of all edges on the path, this revenue is at least $2^m$ as this is a trivial bound for the highest value edge on the path.
By the claim that there is at most one seller of even index on the path with price that is at most $1$, the total equilibrium revenue (and welfare) is bounded by the total welfare of the path, plus the revenue (and welfare) of the gadget connected to  the single seller on the path that prices at a price of at most $1$, which is at most $3+\ln d$.
The welfare of the path is at most twice the welfare of the odd index edges, as the value of the edges increases.
That sum of values of the odd index edges is $\sum_{i=1}^{m+1} a_i$ for the series defined by the recursion: $a_1=5$ and $a_{i+1}=2a_i+6$.
Now $\sum_{i=1}^{m+1} a_{i}  = a_1 + \sum_{i=1}^{m} a_{i+1} = a_1+ \sum_{i=1}^{m} (2a_i+6) = a_1 + 6m+2 \sum_{i=1}^{m} a_i $, thus
$a_{m+1}= a_1 + 6m+ \sum_{i=1}^{m} a_i $ or alternatively $\sum_{i=1}^{m+1} a_i = 2a_{m+1}-a_1-6m<2a_{m+1} $. Thus to bound the sum, we only need to bound $2a_{m+1}$.
We prove by induction that $a_i<6\cdot 2^i - 6$. For $i=1$ it indeed holds that $a_1=5<12-6=6$. Assume this holds for $i$, we prove it for $i+1$.
Indeed, by the induction hypothesis $a_{i+1}=2a_i+6<2(6\cdot 2^i -6)+6 =6\cdot 2 ^{i+1}-12+6=2^{i+1}-6$. Thus $a_{m+1}< 2^{m+1+1}-6 <2^{m+2}$ and therefore  $\sum_{i=1}^{m+1} a_i < 2^{m+3}$.
This implies that the welfare of the path graph is bounded by $2^{m+4}$.

We conclude that the ratio of the monopolist revenue to equilibrium revenue (and welfare) is at least
$ \frac{\max\{m \ln (d-2), 2^m\}}{3+\ln d + 2^{m+4}}. $
For $m=\Theta(\ln \ln d)$ this ratio tends to $\Omega(m)$ as we aimed to prove.
\end{proof}

\section{Trees: Best Response Dynamics}
\label{app:best-response-trees}

\noindent{Proof for Theorem \ref{thm:best-res-dyn-tree}:}\emph{
	In every tree, for every initial profile of prices, there exists a sequence of player best replies that terminates in a non-malicious Nash equilibrium.
}

%
\begin{proof}
	We will pick a leaf $u$ of the tree that is connected to the rest of the tree via vertex $v$, and the edge between $u$ and $v$ has, w.l.o.g. and for simplicity of notation, weight 1.
	Let $x_0$ be the initial price of $u$ and $y_0$
	be the initial price of $v$.  Our best reply (BR) dynamics will proceed  
 by repeatedly, for $i=1....$, let $x_i$ be $u$'s reply to $y_{i-1}$, and then recursively
	use a BR sequence that updates the rest of the tree, assuming that $u$'s value is set to $x_i$.  Note that this recursive BR sequence
	starts with $v$ updating his value (since all the other vertices are already best-replying from the previous recursive call, but then other vertices may
	update their value and $v$ may update again, and so on).  The value of $v$'s price at the end of the recursive call is called $y_i$.  The
	recursive call on the sub-tree terminates due to an inductive use of the theorem (i.e., the theorem is proved by induction on the number of vertices in
	the tree).  To ensure that the induction hypothesis applies to the recursive call which is applied not just to a subtree, but rather to a subtree to which an
	extra leaf $u$ with a {\em fixed} value $x_i$ is attached, we prove the theorem (inductively) also for trees in which each vertex may have an arbitrary
	number leaves with a fixed value attached to them.

	Before analyzing the dynamics of such a repeated loop of best replies in detail, we mention 
simple situations where the best-reply dynamics terminates
	since everyone is already best-replying:
The first cases are when $x_i = x_{i-1}$ or $y_i = y_{i-1}$.
The other cases are when 	$x_i+y_{i-1}=1$ or  $x_i+y_i=1$.
	The rest of the proof will thus assume, w.l.o.g., that none of these ever happen.  There is one other general case where we can immediately know that the BR dynamics ends: if the BR (either
	of $u$ or of $v$) increased the value from the previous one without changing the status of the $(u,v)$ edge.  I.e.:
	\begin{enumerate}
		\item If $x_{i-1} + y_{i-1} > 1$ and $x_i > x_{i-1}$ then $y_{i-1}$ is already a best-reply to $x_i$.  Similarly, if $x_i + y_{i-1} > 1$ and $y_i > y_{i-1}$ then $x_i$ is already a best reply to $y_i$.
		\item If $x_i + y_{i-1} < 1$ and $x_i > x_{i-1}$ then $y_{i-1}$ is already a best-reply to $x_i$.  Similarly, if $x_i + y_i < 1$ and $y_i > y_{i-1}$ then $x_i$ is already a BR to $y_i$.
	\end{enumerate}

	When analyzing the BR behavior of $u$ and $v$, we have tighter control on $u$ since it is just replying to $v$ (and the fixed values of the fixed leaves attached to it).  $v$, on the other hand,
	is best replying to $u$ as well as to the rest of the tree  connected to it; when $v$'s value changes, the values of the rest of the tree change which may cause further indirect changes. 
	So for $u$ we can state the following:
	

	\begin{enumerate}
		\item There exists a threshold $\theta$ and a price $x^*$ such that $u$'s BR to any $y > \theta$ is $x^*$ where $x^*+y>1$, while for any $y < \theta$ $u$'s BR $x$ to $y$ satisfies
		$x+y \le 1$.  Proof: Note that if $y'>y$ and the BR to $y$ already satisfies $x+y>1$ then $x$ must also be a BR to $y'$ so let $\theta$ be the infimum value of $y$ to which the BR satisfies $x+y>1$
		and let $x^*$ be that BR (we assume here some fixed tie breaking rule between multiple BR).
		\item For $y<\theta$, $u$'s BR $x$ to $y$ is monotone non-increasing in $y$.
		\item If $u$ has at most $d$ leaves with fixed value attached to it then there exists at most $d$ possible different values to the BR of $u$ to all values of $y$, in addition to best replies $x$ that satisfies
		exactly $x+y=1$.  Proof: at least one of the fixed value edges must be saturated.
	\end{enumerate}

	Let $x_i$ be a local minimum of the BR process, i.e. $x_i < x_{i-1}$ and $x_i < x_{i+1}$.
	Such a local minimum must exist since there are only finitely many possible values for $x_i$ (as we ruled out $x_i+y_{i-1}=1$).
	
	Case I: $x_i + y_{i-1} > 1$, and thus $x_i = x^*$.  Now there are two possibilities.
	(Ia) If $x_i + y_i > 1$ then since $x_{i+1}>x_i$ we also have $x_{i+1}+y_i>1$ a contradiction since in this case $x_i=x^*=x_{i+1}$.  Otherwise,
	(Ib) $x_i + y_i < 1$ and since if we had $x_{i+1}+y_i > 1$ it would again imply the contradiction that $x_i=x^*=x_{i+1}$ we must also have $x_{i+1}+y_i < 1$ and
	thus since $x_{i+1}>x_i$ we have that $y_i$ is already a BR to $x_{i+1}$.
	
	Case II: $x_i + y_{i-1} < 1$, there are three possibilities.  If (IIa) $x_i + y_i > 1$ then since $x_{i+1}>x_i$ we have $x_{i+1}+y_i > 1$ and we are done.
	Otherwise $x_i + y_i < 1$.  Now if (IIb) $y_i > y_{i-1}$ then $x_i$ is already a best reply to $y_i$ and we are done; otherwise (IIc) $y_i < y_{i-1} < \theta$ so
	$x_{i+1}+y_i < 1$ and thus $y_i$ is already a BR to $x_{i+1}$ and we are done.
\end{proof}

\section{Hyper-graphs}
In this section, we show how our positive result can be extended to 
hyper-graphs.
Hyper-graphs model single minded buyers, each is interested in a set of arbitrary size and is represented by an hyper-edge.
Like for graphs, for hyper-graphs it is also the case that the degree of a node is the number of hyper-edges incident to the node.
The {\em maximum degree} is the maximum degree of any node.
While for graphs all edges were incident on exactly two edges, this is no longer the case for hyper-graphs.
The size $|e|$ of an hyper-edge $e$ is defined to be the number of nodes incident to $e$.
Let $e_{max}$ denote the {\em size of the hyper-edge with maximum size}.

\noindent \emph{Arboricity in hyper graphs}:
Let the {\em incidence graph} of a hyper-graph be a bipartite graph with nodes on one side and hyper-edges on the other, and an edge between a node and an hyper-edge exists 
iff 
the node incidents the hyper-edge.
We use the definition of Berge \cite{berge1973graphs} for acyclicity of an hyper-graph: a hyper-graph is Berge-acyclic (acyclic for short) if its incidence graph is acyclic.
A {\em forest} is an acyclic hyper-graph.
 The {\em arboricity} of an undirected Hyper-graph is the minimum number of forests into which its Hyper-edges can be partitioned.
Equivalently, it is the minimum number of spanning forests needed to cover all Hyper-edges of the graph.


The proof of the next theorem is an adaptation of the proof of theorem \ref{thm:graph-poa}, and is repeated below with the required changes.

\begin{theorem}
	\label{thm:hyper-graph-pos}
	In every hyper-graph with maximum degree $d$ and arboricity $w$ and every non malicious equilibrium in it, the total revenue of all sellers is
	at least $\frac{1}{e_{max}(w+1+\log d)}$ 	
	fraction of the maximum welfare.
\end{theorem}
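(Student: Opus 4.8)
My plan is to follow the proof of Theorem~\ref{thm:graph-poa} almost verbatim, replacing the factor $2$ by $e_{max}$ throughout. The one genuinely new ingredient is the way over-priced hyper-edges are charged to sellers, so I will spend most of the effort there. Throughout, for a seller $v$ and an incident hyper-edge $e$, I write $\mathrm{slack}_v(e) = v_e - \sum_{u\in e,\, u\neq v} p_u$, generalizing the paper's notion of slack (which for graphs was $v_{i,j}-p_j$). Fix a hyper-graph and a non-malicious NE, let $r$ be the total revenue and $r_v$ the revenue of $v$. I will partition the hyper-edges into two classes: call $e$ \emph{servable} if some incident seller $v$ has $\mathrm{slack}_v(e)\ge v_e/e_{max}$, and \emph{over-priced} otherwise. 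Since every edge falls into exactly one class, it suffices to bound $\sum_{\text{servable}} v_e$ and $\sum_{\text{over-priced}} v_e$ separately, each by a multiple of $r$.

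For the servable edges I would reuse Claim~\ref{claim:log-approx-demand-curve}. Assign each servable edge to the incident seller of largest slack, and observe that the very same deviation argument (the $j$-th largest slack among $v$'s incident edges cannot exceed $r_v/j$, or else pricing just below it sells $j$ edges for revenue exceeding $r_v$) yields $\sum_{e\ni v,\ \mathrm{slack}_v(e)>0}\mathrm{slack}_v(e)\le r_v(\ln d+1)$. Since a servable edge assigned to $v$ satisfies $v_e\le e_{max}\cdot\mathrm{slack}_v(e)$, summing over all sellers gives $\sum_{\text{servable}} v_e \le e_{max}(\ln d+1)\,r$, the analogue of the $2(\ln d+1)r$ bound in the original proof.

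The over-priced edges are the crux, and here the naive adaptation breaks: in a graph, an edge that no endpoint can serve has both endpoints high (price $\ge v_e/2$), so the arboricity map lands on a high seller automatically; in a hyper-graph an over-priced edge may still contain low-priced sellers, and the combinatorial arboricity map could charge it to such a seller, where its value is not controlled by that seller's revenue. I would resolve this in two steps. First, I claim every over-priced edge has \emph{at least two} high sellers (price $\ge v_e/e_{max}$): if it had at most one high seller, then for that seller (or any seller, if none are high) the other prices sum to less than $(e_{max}-1)v_e/e_{max}$, forcing slack $>v_e/e_{max}$ and contradicting over-pricedness. Second, I would \emph{restrict} each over-priced edge to its high sellers. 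Deleting sellers from a hyper-edge only removes incidences, hence keeps each Berge-forest acyclic, so the restricted over-priced hyper-graph still has arboricity at most $w$. Now, rooting each incidence forest at a vertex and mapping every hyper-edge to one of its \emph{child} vertices (the hyper-graph analogue of the graph rooting argument already used in the proof of Theorem~\ref{thm:graph-poa}) gives a map sending each over-priced edge to one of its high sellers, with at most $w$ edges per seller. Since a high seller $v$ of $e$ satisfies $v_e\le e_{max}\,p_v\le e_{max}\,r_v$ (using non-maliciousness, $r_v\ge p_v$), I get $\sum_{\text{over-priced}} v_e\le \sum_v w\cdot e_{max} r_v = e_{max}\,w\,r$.

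Combining the two bounds yields $\sum_{e} v_e \le e_{max}(\ln d+1)\,r + e_{max}\,w\,r = e_{max}(w+1+\ln d)\,r$, i.e.\ $r$ is at least a $\frac{1}{e_{max}(w+1+\ln d)}$ fraction of the maximum welfare, as claimed. I expect the only subtle point to be the over-priced case: establishing that such edges carry at least two high sellers, and verifying that restricting to high sellers preserves arboricity so that the ``map to a child vertex'' argument can be steered onto high sellers. Everything else is a mechanical rescaling of the graph proof from $2$ to $e_{max}$.
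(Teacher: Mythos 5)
Your proposal is correct, and at the step you rightly identified as the crux it is actually \emph{more} careful than the paper's own proof. The paper uses the same two-class decomposition (its ``all-high'' edges are your over-priced ones, with a per-edge threshold $v_e/|e|$ in place of your uniform $v_e/e_{max}$ --- an immaterial difference, since $|e|\le e_{max}$ and both choices yield the stated $\frac{1}{e_{max}(w+1+\ln d)}$ ratio) and the same harmonic-deviation bound for the low-slack class, assigning each such edge to a high-slack seller exactly as you do via Claim~\ref{claim:log-approx-demand-curve}. For the all-high class, however, the paper simply asserts that \emph{every} vertex incident to an all-high hyper-edge prices at least $v_e/|e|$ and then lets an arbitrary arboricity map charge the edge to any adjacent vertex. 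That assertion does not follow from the definitions: with $|e|=3$, $v_e=3$ and incident prices $(0,\,5/2,\,5/2)$ the three slacks are $-2,\,1/2,\,1/2$, all below $v_e/|e|=1$, so the edge is all-high even though one incident seller prices at $0$; attaching a pendant edge of value $5$ to each of the two high sellers completes this to a genuine non-malicious equilibrium, and if the map charges $e$ to the zero-priced seller the inequality $v_e\le e_{max}\,r_u$ breaks. Your repair --- at most one incident seller of an over-priced edge can be low, hence at least two are high; restricting each over-priced edge to its high sellers preserves Berge-acyclicity and hence the arboricity bound; rooting each incidence tree at a vertex and mapping every hyper-edge to a \emph{child} vertex gives at most one charged edge per vertex per forest --- is precisely what is needed to steer the charge onto a seller with $p_v\ge v_e/e_{max}$, where non-maliciousness gives $v_e\le e_{max}p_v\le e_{max}r_v$; and the two-high-sellers claim guarantees each restricted edge retains at least two vertices, so a child vertex always exists. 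Incidentally, your ``child vertex'' direction also silently corrects the paper's footnote, which says to map each hyper-edge to its \emph{parent} node --- the wrong direction, since a single vertex can be the parent of many hyper-edges, and the graph-case proof of Theorem~\ref{thm:graph-poa} indeed maps edges to child nodes. The closing arithmetic, $e_{max}(\ln d+1)r+e_{max}\,w\,r=e_{max}(w+1+\ln d)\,r$, is right, so your write-up constitutes a complete argument where the paper's is elliptical, arguably gapped, at exactly the point you flagged.
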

\begin{proof}
	Fix a hyper-graph and a non malicious NE in it.
	We say that an hyper-edge $e$ incident on vertex {\em $u$ has high slack for $u$} if the slack of  $u$ on $e$ is least $1/|e|$ of the value of
$e$. 
	We say that an hyper-edge $e$ is {\em all-high} if there is no vertex $u$ incident on $e$ for which $u$ has high slack on $e$ (such a hyper-edge clearly does not buy as it has at least $|e|$ incident sellers, each pricing more than $1/|e|$ fraction of
its the value).
	Let $E^H$ denote the set of all-high hyper-edges.
	We next partition the hyper-edges that are not all-high, mapping each hyper-edge to one of the incident sellers for which the edge has high slack. Let $E^u$ be the set of hyper-edges mapped to $u$.
	
	Observe that since every hyper-edge is either all-high or has high slack for some node, the set of all edges $E$ is covered as follows: $E=E^H \cup (\cup_{u} E^u)$.
	Thus $$\sum_{e\in E} v_{e }\leq  \sum_{e\in E^H} v_{e} + \sum_{u} \sum_{e\in E^u} v_{e} $$
	Denote the total revenue by $r$. To complete the proof we bound each of the two terms separately, the first by $w\cdot e_{max} \cdot r$ and the second by  $e_{max}\cdot (\ln d +1)\cdot r$,
	together proving the theorem.

	


We can prove, as in Claim  \ref{claim:log-approx-demand-curve}, that
for every node $u$, $\sum_{e\in E^u} v_{e} \leq r_u \cdot e_{max}\cdot (\ln d +1) $ ($r_u$ denotes the revenue of node $u$).
%
%
Thus,	$$\sum_{e\in E^u} v_{e} \leq \sum_{u} \left(	e_{max} \cdot r_u \cdot (\ln d +1) \right)  = e_{max} \cdot (\ln d +1)\cdot r.$$
We next bound the total value of the all-high hyper-edges.\footnote{We will use the fact that in a hyper-graph of arboricity $w$ there exists a mapping from hyper-edges to vertices such that every hyper-edge is mapped to one of its incident vertices and no vertex has more than $w$ hyper-edges mapped to it (this is true as we can just root each tree and map every hyper-edge to its parent node).}
	Since the hyper-edge is all-high and the equilibrium is non malicious, the price -- and thus also the revenue -- of each of the vertices on the hyper-edge $e$ is at least $1/|e|$ the value  of the hyper-edge $e$.
	Summing, again, over all vertices, we get that the
	total value of all all-high hyper-edges is at most $e_{max}\cdot w$ the total revenue of all vertices.
	Stated formally, consider the mapping $M$ from $E^H$ to the set of nodes $N$ that maps each hyper-edge to an adjacent node and never map more than $w$ hyper-edges to the same node.
	It holds that
	$$
	\sum_{e\in E^H} v_{e} \leq
	\sum_u \sum_{e\in M(u)} v_{e} \leq
	\sum_u \left(w\cdot e_{max}\cdot r_u \right) \leq e_{max}\cdot w\cdot r
	$$
\end{proof}


\begin{corollary}
	\label{cor:tree-lb-hyper}
	In any hyper-tree with maximum degree $d$ and every non malicious equilibrium in it, the total revenue of all sellers is $\Omega(1/(e_{max}\cdot \log d))$
	fraction of the maximum welfare.
\end{corollary}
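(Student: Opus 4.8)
The plan is to obtain this as an immediate specialization of Theorem~\ref{thm:hyper-graph-pos} to the case of arboricity one. The only thing to verify is that a hyper-tree has arboricity $w=1$, after which the bound follows by direct substitution together with an asymptotic simplification.

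First I would confirm the arboricity of a hyper-tree. Recall that a hyper-tree is a connected Berge-acyclic hyper-graph, so by definition its incidence graph is acyclic and hence the hyper-graph is itself a \emph{forest} in the sense defined above. Since the arboricity is the minimum number of forests needed to cover all hyper-edges, and the hyper-tree is itself a single such forest covering all of its own hyper-edges, its arboricity is exactly $1$. This parallels the fact that an ordinary tree has arboricity $1$, which is exactly the substitution used to derive Corollary~\ref{cor:tree-lb} from Theorem~\ref{thm:graph-poa}.

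Second, I would substitute $w=1$ into the guarantee of Theorem~\ref{thm:hyper-graph-pos}. The total revenue in any non-malicious equilibrium is then at least a $\frac{1}{e_{max}(1+1+\log d)}=\frac{1}{e_{max}(2+\log d)}$ fraction of the maximum welfare. For $d\geq 2$ we have $2+\log d = O(\log d)$, so this fraction is $\Omega\!\left(1/(e_{max}\cdot \log d)\right)$, which is precisely the claimed bound.

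There is essentially no obstacle beyond correctly reading off the arboricity from the Berge-acyclicity definition; all of the quantitative content is already contained in Theorem~\ref{thm:hyper-graph-pos}, and the corollary is just the $w=1$ slice of that bound after discarding the additive constant inside the logarithmic term.
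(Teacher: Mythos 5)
Your proposal is correct and matches the paper's (implicit) derivation exactly: the corollary is stated without proof precisely because it is the $w=1$ specialization of Theorem~\ref{thm:hyper-graph-pos}, mirroring how Corollary~\ref{cor:tree-lb} follows from Theorem~\ref{thm:graph-poa}. Your verification that a hyper-tree, being Berge-acyclic, is itself a single forest and hence has arboricity $1$ is the only step needed, and your substitution and asymptotic simplification are sound.
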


\section{Proofs from Section \ref{sec:windfall}}
\label{app:malice}

\noindent \textbf{Proof of Proposition \ref{prop:mal-better-ne}}:

\begin{proof}
	For degree $d$ consider the following tree that is constructed from two stars connected by their centers with an edge of value $2$. The first star with a center denoted by $A$ had $d-1$ spikes, and the $i$-th edge has value $1/i$ (``harmonic gadget''). The second star with a center denoted by $B$ has $d-1$ spikes, each with value $3/(d-1)$.
	
	
	First observe that equilibrium revenue is at least $\ln d$, as the following is an equilibrium:
	$B$ sets a price of $2$, $A$ sets a price of $0$, and all the other neighbors of $B$ price at $10$, while the neighbors of $A$ each price at the value of the edge, getting revenue that equals the welfare of the harmonic gadget.
	
	We claim that in any non-malicious equilibrium $B$ must price at a price $p$ that is at most $3/(d-1)$. Assume that $p>3/(d-1)$. In this case the revenue of $B$ is at most $2$ (from the edge with $A$), and as it is a non-malicious equilibrium and all edges between $B$ and its spikes are not sold, each of these adjacent sellers must price at $0$. But in this case $B$ can deviate, price at $3/(d-1)$, and get revenue of at least $3$, a contradiction.

	As $B$ must price at a price that is at most $3/(d-1)$, in any non-malicious equilibrium $A$ gets revenue of at least $2-3/(d-1)$ by pricing at $2-3/(d-1)$.
	Note that for $d>7$ it holds that $2-3/(d-1)>3/2$.
	
	If $A$ set a price that is no larger than $1/2$, he get revenue that is at most $3/2$. We conclude that for $d>7$ the price of $A$ will be at least $1/2$.
	
	Finally, observe that in any case that $A$'s price is at least $1/2$, the welfare (and thus the revenue), is at most $7$, this is in contrast to the revenue in equilibrium, 
	which is at least $\ln d$.	
\end{proof}

%
%
%

\end{document}